\newtheorem{theorem}{Theorem}
\newtheorem{corollary}{Corollary}
\newtheorem{lemma}{Lemma}
\newtheorem{claim}{Claim}
\theoremstyle{definition}
\newtheorem{definition}{Definition}
\newtheorem{construction}{Construction}
\newcommand{\pluseq}{\mathrel{+}=}
\title{Relaxation for Efficient Asynchronous Queues}
\author{Samuel Baldwin, Cole Hausman, Mohamed Bakr,\\Edward Talmage\footnote{Corresponding Author} (elt006@bucknell.edu)}
\date{Bucknell University, Lewisburg, PA, USA}
\begin{document}

\maketitle

\begin{abstract}
We explore the problem of efficiently implementing shared data structures in an asynchronous computing environment.  We start with a traditional FIFO queue, showing that full replication is possible with a delay of only a single round-trip message between invocation and response of each operation.  This is optimal, or near-optimal, runtime for the $Dequeue$ operation.  We then consider ways to circumvent this limitation on performance.  Though we cannot improve the worst-case time per operation instance, we show that \emph{relaxation}, weakening the ordering guarantees of the Queue data type, allows most $Dequeue$ instances to return after only local computation, giving a low amortized cost per instance.  This performance is tunable, giving a customizable tradeoff between the ordering of data and the speed of access.

\textbf{Keywords:} Distributed Data Structures \and Asynchronous Algorithms \and Relaxed Data Types
\end{abstract}

\section{Introduction}

Shared data is a core aspect of modern computing.  Datasets are growing, local computational power cannot keep up, and data collection is continually growing more widespread.  In the naturally resulting geographically distributed systems, one of the primary concerns is accessing data that other members of the system may be accessing concurrently.  Without careful coordination, we can inadvertently use stale versions of the data, overwrite other users' work, or even corrupt the entire data structure.

Distributed data structures are a fundamental construct for allowing computing to spread across different machines which need to work together.  By specifying the interface of data operations users may call and the exact effects of those operations, we provide an abstraction layer that allows a programmer designing for a distributed system to not worry about the details of coordinating and maintaining data.  Instead, they can focus on their application.  In co-located parallel computation, hardware can provide shared access to memory, but in geographically distributed systems, we must build that abstraction layer, considering all the possible oddities of concurrency, messages, delays, and different views of the data.  We focus on optimizing the interactions with these system elements, and provide the developer with an efficient tool which provides well-specified guarantees.

In this work, we provide distributed, message-passing algorithms implementing shared data types.  Our goal is to minimize the time between when a user invokes an operation on a simulated shared data structure and when that operation returns.  Past work has given tight or nearly tight bounds on implementations of a variety of data structures \cite{Kosa99,WangTalmageLeeWelch18} in partially synchronous systems, where processes know bounds on the real time which messages between participating processes may take to arrive.  However, real-world systems do not generally have reliable bounds on timings, so these algorithms are difficult to implement in practical systems.  As a next step towards such practical implementations of efficient distributed data structures, we here consider an asynchronous model of computation.  In this model, processes cannot rely on any local measure of time, or on message delays carrying any system-level information about the timing or order of remote operations.  Instead, we construct logical timestamps and use them to agree on the order in which operations appear to take place.

This paper contains two primary results.  First, we present an algorithm for a standard FIFO queue in a fully asynchronous message passing model.  To our knowledge, this is the first published fully-replicated, integrated queue algorithm for an asynchronous system.  This algorithm is of some, but limited, independent interest, since there are extant algorithms with similar performance.  In Section 17.1.2 of her textbook \cite{Lynch96}, Lynch presents a general construction for implementing shared data structures in message passing systems which can be instantiated to implement a queue.  That algorithm is centralized, with one controlling process storing all data and determining the linearization of operation instances.  Our solution matches the performance of this algorithm, but in a fully-distributed way, with all processes equivalent in the system.  Lubowich and Taubenfeld \cite{LubowichTaubenfeld11} gave a similar implementation, though in a modular form relying on separate locking and counter implementations and without formal analysis.  That algorithm also only partially replicated data across processes.  

Replication is important to us for a few reasons.  First, in our second result we use replication to improve performance.  We are able to have some $Dequeue$ instances return after only local work, not waiting for communication, but this is only possible when the relevant data is stored at the invoking process, which is not necessarily true in the prior algorithms.  Second, looking ahead to future work, one of the primary uses of data replication is fault tolerance.  In a centralized algorithm, if the coordinating process crashes, the entire system fails.  With data replication, it may be possible to continue to operate in the presence of crashes.  This paper does not handle fault tolerance, but that is our intended next step, after our work in this paper to tolerate asynchrony.  Thus, adding replication at this point puts us in a better position to build on this work in the future.

Our second result considers the notion of \emph{relaxing} a data type: weakening the guarantees of the data type specification \cite{HenzingerKirschPayerSezginSokolova13}.  We can then use this weakening to achieve better performance than for a basic data type.  Extending an algorithm of Talmage and Welch \cite{TalmageWelch14}, we consider a version of a Queue datatype that allows $Dequeue$ to not necessarily return the single oldest value, but any sufficiently old value in the structure.  This allows us to determine return values for most $Dequeue$ invocations without waiting for any communication, thus bringing the amortized cost of $Dequeue$ far below the minimum possible for an unrelaxed queue.  We provide an algorithm doing this, building on our first algorithm.  This algorithm is the first to implement a relaxed queue in an asynchronous system and proves that the performance benefits relaxation gave in partially synchronous systems are also possible in asynchronous systems.  We even reduce the cost of inexpensive $Dequeue$s below what that paper achieved in a better-behaved system.  While the relaxed queue does not provide the same ordering guarantees as a FIFO queue, in many concurrent executions when multiple processes are concurrently dequeueing, the behavior will be indistinguishable.  If a process dequeues, for example, the third oldest element in the queue, this appears similar (at least at that time), to an execution of an unrelaxed queue where two other processes concurrently dequeued the two older elements, as this process will still dequeue the third oldest element.

Another intriguing property of relaxed queues is that they are not as subject to the impossibility results for asynchronous, fault-tolerant systems \cite{ShavitTaubenfeld16,TalmageWelch19}.  Thus, it may well be possible to extend our result in the future to asynchronous, fault-tolerant implementations.  This would be a major step forward, since the impossibility of implementing queues and most other useful data types \cite{Herlihy91,FischerLynchPaterson85} severely limits the capabilities of distributed systems.  This paper is a step on the road toward that goal, and we are excited to pursue such implementations.

\subsection{Related Work}

There is a large body of work on message-passing implementations of shared data structures in partially-synchronous models, where it is easier to prove lower bounds much more easily than in real-world systems (e.g. \cite{Kosa99,WangTalmageLeeWelch18,ChandraHadzilacosToueg16}).  In these models, we know that complex operations like $Dequeue$ on a queue are inherently expensive, needing to wait for communication before they can return to the user.  Simpler operations, such as $Read$ and $Write$ operations on a register can be faster, but often the sum of their worst-case delays cannot be too low \cite{MavronicolasRoth99}.

We build on the algorithms of Wang et al. \cite{WangTalmageLeeWelch18} and Talmage and Welch \cite{TalmageWelch14}.  Our first algorithm extends that in \cite{WangTalmageLeeWelch18} with vector clocks to handle asynchrony.  Our second algorithm similarly extends one from \cite{TalmageWelch14}, using the construction from our first algorithm to handle asynchrony.  Wade and Talmage \cite{WadeTalmage20} showed that the relaxed queue algorithm we build on does not need full replication, but can store a single copy of each data element, without giving up too much performance.  For clarity, we do not include that optimization in this work, but it should be straightforward to modify our algorithm in a similar way.

In asynchronous models, there is a great deal of active research on very strong primitives such as consensus objects, particularly in the presence of failures in the system (e.g. \cite{PuFarahbakhshAlvisiEyal23,CohenKeidar23}).  These objects can implement arbitrary other objects \cite{Herlihy91}, so there is good reason to pursue them, but their extra strength can make them more expensive to implement, and they are typically not deterministic, since asynchronous fault-tolerant consensus is deterministically impossible \cite{FischerLynchPaterson85}.  At the other end of the spectrum, simple types like registers have a long history of message-passing implementations, largely centered on the classic ABD algorithm \cite{AttiyaBar-NoyDolev95}.  This algorithm tolerates both asynchrony and process failures.  However, registers are limited in their ability to implement more complex data types.  Our goal is to implement a somewhat more complex data type as efficiently as possible.  We also do not yet consider failures, though that is the next step in our research trajectory.

Any concurrent implementation of a sequentially-specified data type must satisfy a \emph{consistency condition}, which specifies how concurrent behavior maps to sequences of operations, and defines which concurrent behaviors are allowed.  Linearizability, introduced by Herlihy and Wing \cite{HerlihyWing90} is the standard condition, giving both the most intuitive behavior and providing other useful guarantees like composability of different shared objects, but coming at a high time cost \cite{AttiyaWelch94}.  There is much work exploring weaker consistency conditions (See Viotti and Vukolich's work for a survey \cite{viottiVukolic16}), which have a variety of different properties and possible performance levels.

Another approach to circumventing some of the lower bounds on linearizable implementations is to weaken the data type itself.  This concept, known as \emph{relaxation}, was pioneered by Afek et al. \cite{AfekKorlandYanovsky10}, then formalized by Henzinger et al. \cite{HenzingerKirschPayerSezginSokolova13}.  Talmage and Welch proved some lower bounds, but showed that better amortized time for some types of relaxed $Dequeue$ is possible than for an unrelaxed $Dequeue$ \cite{TalmageWelch14}.  We follow that work, but in an asynchronous model, while those algorithms assumed knowledge of message delay bounds.

While we focus primarily on performance, another interesting aspect of our work is the computability of relaxed data types.  Herlihy \cite{Herlihy91} established consensus number, the maximum number of processes which can use a type to solve distributed consensus in an asynchronous system subject to crash failures, as a measure of the computational power of a data type.  Shavit and Taubenfeld \cite{ShavitTaubenfeld16} and Talmage and Welch \cite{TalmageWelch19} showed that relaxing a queue can, in some cases, reduce its computational power. While this may seem undesirable, it could perhaps allow implementations of structures that behave very similarly to a traditional queue but which are not subject to the impossibility of solving consensus.  In fact, Casta{\~n}eda, Rajsbaum, and Raynal \cite{CastanedaRajsbaumRaynal20} demonstrated another relaxation of a queue, called multiplicity, that allows non-blocking set-linearizable implementation (a similar but slightly weaker type of implementation than that considered in consensus numbers) from $Read$/$Write$ registers, a significant step towards implementing relaxed queues that tolerate asynchrony and failures, while behaving in predictable ways.  We hope as a next step to add failure-tolerance to the algorithms we present in this paper, while maintaining low operation delays.

\section{Model and Definitions}

\subsection{Asynchronous System Model}
We assume a standard fully asynchronous message passing model of computation.  The system contains a set of $n$ processes $\Pi = [p_0, \dots , p_{n-1}]$ modeled as state machines, and a set of $n$ users, one for each process.  State machines accept two types of input event: the corresponding user may \emph{invoke} an operation, or the machine may \emph{receive} a message.  Each of these will trigger a handler.  Each process, in its handlers, may perform local computation or one of of two external actions: it may \emph{respond} to its user or \emph{send} a message to another process.  The state machines are time-free, meaning that their output is only described through their input and state transitions without any upper or lower time bound on computation or message delay.  A user may not invoke an operation at a particular process until the process responds to its last invocation.  

We consider failure-free systems, in processes always behave per the state machine specification.  We assume all inter-process communication is reliable, so when a process sends a message, that message will arrive at its destination process exactly once after a finite amount of time.  Additionally, we assume communication channels between processes are First-In, First-Out (FIFO).  That is, if process $p_i$ sends message $m_1$ to process $p_j$, then sends message $m_2$ also to $p_j$, $p_j$ will receive $m_1$ before it receives $m_2$.  This assumption does not reduce the generality of our results, as we can implement such an ordering guarantee by attaching a sequence number to each message and buffering incoming messages at each process.  We store out of order messages in this buffer until all previous messages are received, then receive the buffered message.

While time is unavailable to processes in the system, we establish a notion of the time cost of our algorithms by considering that they run in real time.  When analyzing our algorithms, we will speak of a \emph{run}, which comprises a sequence of state-machine events for each process.  We consider \emph{timed runs}, which are simply runs with a real-time associated with each event.  A run is \emph{admissible} if every message arrives exactly once, in FIFO order, users invoke only only operation instance at a time, and each state machine runs infinitely.

There are no upper or lower bounds on either local computation time or the time between when a message is sent and received.  However, we can measure an algorithmic time, from outside the system, by expressing it in terms of the number of sequential messages which must occur in that duration.  That is, in a particular run, up to a finite point where the algorithm reports that it terminates, we define the parameter $d$ as the real time duration of the longest time between sending and receiving any one message.  We express algorithmic time in terms of $d$, by considering how many messages the algorithm sends and receives where a send causally must occur after the receive of the previous message \cite{Awerbuch85}.  Since each of those delays may equal the largest in that run, this gives a measure of algorithmic runtime without bounding real time message delays.  See Section 2.1.2 in \cite{AttiyaWelch94} or Section 3.2.1 in \cite{Barbosa96} for more details.

\subsection{Data Type Definitions}

We first state the definition for a standard FIFO queue abstract data type, then that for the relaxed queue we implement.  We state abstract date type (ADT) specifications in two parts: First, a list of operations the user can invoke, with argument and return types, expressed as $OP(arg,ret)$.  We use $-$ to indicate when a function takes no argument or returns no value.  Second, we define the set of legal sequences of \emph{instances} of these operations.  An operation instance is an invocation-response pair, noting that these are separate events in a distributed setting.

To simplify the statement of our definitions, we assume that each argument to $Enqueue$ is unique.  One way to achieve this practically would be a simple abstraction layer that adds a (logical) timestamp to each value before it is passed to the data structure.  We also define the notion of \emph{matching}: A $Dequeue$ instance matches an $Enqueue$ instance if it returns that $Enqueue$ instance's (unique) argument.  We will use the special character $\bot$ to represent an empty queue.

\begin{definition}\label{def:FIFOQueue}
  A \emph{Queue} over a set of values V is a data type with two operations:
  \begin{itemize}
  \item $Enqueue(val,-), val \in V$ 
  \item $Dequeue(-, val), val \in V \cup \{\bot\}$ 
  \end{itemize}
  
  The empty sequence is legal.  For any legal sequence $\rho$ of instances of queue operations and $val \in V$, (1) $\rho \cdot Enqueue(val,-)$ is legal, (2) $\rho \cdot Dequeue(-,val)$ is legal iff $Enqueue(val, -)$ is the first unmatched $Enqueue$ instance in $\rho$, and (3) $\rho \cdot Dequeue(-, \bot)$ is legal iff every $Enqueue(val, -)$ in $\rho$ is matched.
\end{definition}

We can now formally define the relaxed queue we implement in this paper.  Intuitively, each $Dequeue$ instance can return one of the $k$ oldest elements in the queue.  If $k=1$, this is the same as the FIFO Queue defined above.

\begin{definition} A \emph{Queue with $k$-Out-of-Order relaxed $Dequeue$}, or a \emph{$k$-Out-of-Order Queue}, over a set of values $V$ is a data type with two operations:
  \begin{itemize}
  \item $Enqueue(val,-), val \in V$
  \item $Dequeue(-,val), val \in V$
  \end{itemize}

  The empty sequence is legal.  For any legal sequence $\rho$ of instances of $k$-out-of-order queue operations and $val \in V$, (1) $\rho \cdot Enqueue(val,-)$ is legal, (2) $\rho \cdot Dequeue(-,val)$ is legal iff $val$ is the argument of one of the first $k$ unmatched $Enqueue$ instances in $\rho$, and (3) $\rho \cdot Dequeue(-,\bot)$ is legal iff there are fewer than $k$ unmatched $Enqueue$ instances in $\rho$.
\end{definition}

We are interested in implementations which satisfy \emph{linearizability} \cite{HerlihyWing90}.  Linearizability is a consistency condition which describes how concurrent executions of a data structure relate to the sequential specification of the ADT the structure implements.  Specifically, linearizability requires that in any admissible timed run, operation response values be such that there is a total order of all operation instances in the execution which is legal according to the ADT and which respects the real-time order of instances which do not overlap in real time.  That is, if instance $op_1$ returns before, in real time, instance $op_2$'s invocation, $op_1$ must precede $op_2$ in the order.  This condition provides behavior that matches our intuitive expectations of a system, disallowing inversions in real time.  Linearizability also has the advantage of being composable, meaning that running multiple linearizable objects in the same execution will necessarily produce an overall linearizable execution.  Linearizability is the strongest consistency condition, so our upper bounds translate to any other consistency condition one might use.

\section{Asynchronous FIFO Queue Algorithm}

\subsection{Description}

We present a fully-replicated implementation of a queue, with coordination done by logical timestamping and messages announcing and confirming invocations.  This algorithm is based on Talmage and Welch's relaxed queue algorithm \cite{TalmageWelch14}, but updated to replace that algorithm's use of timers to agree on a linearization order with vector clocks \cite{Mattern88,Fidge91,SchwarzMattern94}.  This, along with acknowledgment of invocation messages allows us to tolerate asynchrony.  When a user invokes an operation at a particular process, that process increments its local vector clock and then broadcasts that timestamp with a request for the operation instance to take effect.  It then waits until it receives confirmations from all other processes, then returns to the user.  This delay guarantees that all other processes have updated their timestamps and that the invoking process is aware of all operation instances with smaller timestamps.  This enables linearization by guaranteeing that concurrent operation instances are long enough to be aware of each other and return appropriate values.

When a process receives an operation invocation, with its timestamp and any applicable argument values, it updates its vector clock to be larger than that of the received invocation and send an acknowledgment back to the invoker.  For $Dequeue$ instances, the process also sends acknowledgments to all other processes, to enable them to apply all $Dequeue$ instances to their local replica in the correct order.  Depending on the type of the operation, the acknowledging process will either apply it to its own replica immediately on receiving an acknowledgment, in the case of an $Enqueue$ instance, or after receiving acknowledgments from all processes, in the case of a $Dequeue$ instance.  We prove below that this delay is sufficient to guarantee that processes can locally execute every $Dequeue$ instance in timestamp order and enables us to use that order for our linearization.  Since we do not delay local execution of $Enqueue$ instances, we must be careful not to return the argument of an $Enqueue$ to a $Dequeue$ instance too soon, but since we are using timestamp order and associate timestamps with values in the local replicas, this is straightforward.

\paragraph*{Vector Clocks}  Each process maintains a logical vector clock that holds its local view of the number of steps of the algorithm (invocation and message send) each process has taken \cite{Mattern88,Fidge91,SchwarzMattern94}.  We refer to each $p_i$'s vector clock as $v_i$.  Each process' vector clock is an array of size $n$ that is initially 0 at all indices.  When any process $p_i$ invokes $Enqueue$ or $Dequeue$, it increments index $i$ in its local clock, $v_i[i]$, and broadcasts the new vector as the invocation's timestamp.  When another process receives a message containing a timestamp, it will update its vector clock value by first incrementing its own component, then updating the value at each index of its clock vector to the larger of its previous value and the corresponding value in the received timestamp.  This guarantees that each component of the local clock will be at least as large as the received timestamp, recording its knowledge of the previous remote event.

We define two orders on timestamp vectors.  For any two unequal vector timestamps $v_i$ and $v_j$, we say $v_i$ is \emph{strictly smaller} than $v_j$, denoted $v_i \prec  v_j$, if $v_i[x] \leq v_j[x],\forall x \in [0, \dots, n-1]$.  If this is not true, let $k$ be the index where the vectors first differ.  That is, for $x = 0$ to $k-1$, $v_i[x] = v_j[x]$, but $v_i[k] \leq v_j [k]$.  Then we say that $v_i$ is \emph{lexicographically smaller} than $v_j$ and write $v_i << v_j$.  Notice that $v_i \prec v_j$ implies that $v_i << v_j$ but not the opposite.

\paragraph*{Confirmation Lists}
The main algorithmic innovation that allows us to handle asynchrony is a structure we call \emph{Confirmation Lists}.  Each process uses these lists to track acknowledgments from other processes for a given $Dequeue$ instance.  When a process $p_j$ learns about a $Dequeue$ invocation at process $p_i$, $p_j$ will create a \emph{confirmation list} object containing the timestamp that uniquely identifies the invocation and an array of $n$ Booleans to track responses from each process.  The process inserts that object in a list of confirmation lists, $PendingDequeues$, which is sorted in increasing lexicographic order of timestamps, using the function $PendingDequeues.insertByTS(confList)$.

A process can learn about an invocation by receiving a $Dequeue$ request message directly from the invoking process or by receiving a third-party process' response message to a $Dequeue$ invocation.  A confirmation list's $responses$ array initially contains $False$ at every index, except for the position corresponding to the invoking process, which is $True$.  The list fills as the process receives acknowledgments to the $Dequeue$ request, setting the position corresponding to the responding process to $True$.

When a process $p_j$ receives a response from process $p_k$ for a $Dequeue$ instance $op$, it knows that $p_k$ has updated its local clock to be strictly larger than $op$'s timestamp, and thus necessarily larger than the timestamp of any other $Dequeue$ instance with a lexicographically smaller timestamp.  $p_j$ will thus set not only the value at index $k$ of $op$'s confirmation list to $True$, but the value at index $k$ in any other pending $Dequeue$ instances with smaller timestamps.  When the confirmation list at $p_j$ for $op$ contains only $True$ values in its response array, it has received responses from all processes so $p_j$ can locally execute that $Dequeue$ instance, applying it to its own local copy of the queue.  

\paragraph*{Local Replica}
Each process maintains a variable $lQueue$, which is its local replica of the simulated shared queue.  We implement $lQueue$ as a minimum priority queue storing values, process ids, and timestamps, keyed on timestamp.  Each entry is the argument of an $Enqueue$ instance, with the invoking process' id and timestamp.  We remove elements in priority order, but allow the constraint of passing a parameter which upper-bounds the timestamp of the removed value.  If there is no value in $lQueue$ with timestamp lexicographically smaller than that bound, $lQueue$ will indicate that it is empty by returning $\bot$.  We invoke these operations as follows:
\begin{itemize}
\item $lQueue.insertByTS(val, inv, ts)$: Adds an entry containing the triple $(val, inv, ts)$ to $lQueue$, sorting by lexicographic timestamp order on $ts$.
\item $lQueue.dequeue(ts)$: Returns and removes the oldest element in $lQueue$ with timestamp smaller than $ts$, $\bot$ if there is none.
\end{itemize}

\subsection{Pseudocode}

Our algorithm modifies that of Talmage and Welch \cite{TalmageWelch14}.  Instead of using timers based on knowledge of message delays, we use round-trip messages to indicate when processes can apply operation instances to their local replicas of the shared queue.  Pseudocode appears in Algorithms~\ref{alg:fifo} and \ref{alg:helpers}.

\begin{algorithm}[!ht]
  \caption{Code for each process $p_i$ to implement a Queue}\label{alg:fifo}
  \begin{algorithmic}[1]
    \Function{Enqueue}{$val$}\label{fifoline:invEnq}
      \State $EnqResponseCount = 0$ \Comment{Count responses to this invocation}
      \State $Enq_{ts} = updateTS()$ \label{fifoline:enqTS} \Comment{Increment local vector clock, read instance's timestamp}
      \State send $(EnqReq, val, Enq_{ts}, i)$ to all processes\label{fifoline:sendEnqReq}
    \EndFunction

    \Function{Receive}{$EnqReq, val, Enq_{ts}, inv$} from $p_{inv}$
      \State $updateTS(Enq_{ts})$ \label{fifoline:enqReqTSUpdate} \Comment{Update local vector clock by invocation's timestamp}
      \State $lQueue.insertByTS(val, inv, Enq_{ts})$\label{fifoline:executeEnq} \Comment{Locally execute the $Enqueue$ instance}
      \State send $(EnqAck, i)$ to $p_{inv}$ \label{fifoline:sendEnqAck} \Comment{Acknowledge receipt of the invocation}
    \EndFunction

    \Function{Receive}{$EnqAck, j$ from $p_j$}
      \State $EnqResponseCount \pluseq 1$
      \If {$EnqResponseCount == n$} 
        \Return $EnqResponse$ to user\label{fifoline:enqReturn} 
      \EndIf
    \EndFunction \label{fifoline:finishEnq}
    \Function{Dequeue}{}
      \State $Deq_{ts} = updateTS()$ \Comment{Increment local vector clock, read instance's timestamp}\label{fifoline:deqTS}
      \State send $(DeqReq, Deq_{ts}, i)$ to all processes \label{fifoline:sendDeqReq}
    \EndFunction
    \Function{Receive}{$(DeqReq, Deq_{ts}, inv)$ from $p_{inv}$}
    \State $updateTS(Deq_{ts})$ \label{fifoline:deqReqTSUpdate}
      \If{$Deq_{ts}$ is not in $PendingDequeues$}
      \State $PendingDequeues.insertByTS(createList(Deq, Deq_{ts}, p_{inv}$))\label{fifoline:savePendingDeq}
      \EndIf
      \State send $(DeqAck, Deq_{ts}, p_{inv}, i)$ to all processes \label{fifoline:sendSafetyFlag}\label{fifoline:sendDeqAck}
    \EndFunction
    \Function{Receive}{$(DeqAck, Deq_{ts}, p_{inv}, j)$ from $p_j$}
      \If{$Deq_{ts}$ not in $PendingDequeues$}
        \State $PendingDequeues.insertByTS(createList(Deq, Deq_{ts},p_{inv}))$
      \EndIf
      \State Let $currentConfList$ be the confirmation list in $PendingDequeues$ with $ts == Deq_{ts}$
      \State $currentConfList.responses[j] = True$ \label{fifoline:setResponse}
      \State $propagateEarlierResponses(PendingDequeues, Deq_{ts})$ \label{fifoline:propagateEarlier}

      \For{each $confirmationList$ in $PendingDequeues$, in increasing lexicographic timestamp order}
        \If{all $response$s in $confirmationList$ are $True$}\label{fifoline:fullConfList}
          \State $ret = lQueue.dequeue(Deq_{ts})$\label{fifoline:chooseDeqValue} \Comment{Locally execute the $Dequeue$ instance} 
        \State delete $confirmationList$ from $PendingDequeues$
        \If{$p_i == p_{inv}$} \Return $ret$ to user \label{fifoline:deqReturn} \Comment{Invoking process responds to user} \EndIf
      \EndIf 
      \EndFor

      \EndFunction
      \Statex \emph{(continues below)}

    \algstore{fifoLines}  
  \end{algorithmic}
\end{algorithm}

\begin{algorithm}[t]
  \caption{Algorithm~\ref{alg:fifo} continued: Helper functions}\label{alg:helpers}
  \begin{algorithmic}[1]
    \algrestore{fifoLines}
    
    \Function{propagateEarlierResponses}{$PendingDequeues, Deq_{ts}$}
      \Statex \Comment{If a later-timestamp instance receives a $DeqAck$ from $p_j$, then any earlier-timestamp instances need wait no longer for a $DeqAck$ from $p_j$.}
      \For{each confirmation list $PendingDequeues[k]$ starting at $ts = Deq_{ts}$, in decreasing timestamp order} 
        \For{each process id $j$}
          \If{$PendingDequeues[k].responses[j]$ and $k>0$}
            \State $PendingDequeues[k-1].responses[j] = True$
          \EndIf
        \EndFor
      \EndFor  
    \EndFunction
    \Function{updateTS}{$v_j$}
      \State $v_i[i] \pluseq 1$
      \If{$v_j$ is not empty}
        \For{$k = 0$ to $n - 1$}
          \State $v_i[k] = max(v_i[k], v_j[k])$
        \EndFor
      \EndIf
      \State return $v_i$
    \EndFunction
    \Function{createList}{$op, Deq_{ts}, p_{invoker}$}
      \State create a new confirmation list $confirmationList$
      \State $confirmationList.op = op$
      \State $confirmationList.responses = [False, False,...,False]$ \Comment{An array of $n$ Booleans}
      \State $confirmationList.responses[invoker] = True$
      \State $confirmationList.ts = Deq_{ts}$ \Comment{A vector clock timestamp}
      \State $confirmationList.invoker = p_{invoker}$ \Comment{The invoking process of the Dequeue}
      \State return $confirmationList$
    \EndFunction
  \end{algorithmic}
\end{algorithm}

\subsection{Correctness}\label{sec:fifoCorrectness}
In order to prove that this algorithm correctly implements the queue specification, consider an arbitrary, admissible timed run $R$.  We will first prove that all invocations have a matching return, so that we have complete operation instances.  We can then construct a total order of all operation instances in $R$ and prove that it respects real time order, making it a valid linearization, and is legal by the specification of a queue.  The core of the proof is in proving that when any process removes a particular $Dequeue$ instance's return value from its local replica of the queue, it deletes the same value as every other process does.  This means that all replicas will undergo the same series of operations, and allows us to prove that they continue removing the correct values.  

\begin{lemma}\label{fifolem:responses}
  In $R$, every invocation has a matching response.
\end{lemma}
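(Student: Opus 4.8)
The plan is to fix an arbitrary invocation in the admissible run $R$ and show, by cases on whether it is an $Enqueue$ or a $Dequeue$, that the invoking process eventually executes a return. The only system property I will need is reliable delivery: by admissibility every message sent is received exactly once after a finite time. (I will not need the FIFO channel property here; it is needed only for the later ordering arguments.)

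For an $Enqueue$ invocation at $p_i$ the argument is a simple count. On invocation $p_i$ resets $EnqResponseCount$ to $0$ and broadcasts an $EnqReq$ to all $n$ processes (line~\ref{fifoline:sendEnqReq}). By reliable delivery each recipient eventually receives this request, inserts the value into its local replica, and sends an $EnqAck$ back to $p_i$ (line~\ref{fifoline:sendEnqAck}); by reliable delivery again, each of these $n$ acknowledgments reaches $p_i$. Hence $EnqResponseCount$ is incremented exactly $n$ times, and once it equals $n$ the process returns to its user (line~\ref{fifoline:enqReturn}).

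For a $Dequeue$ invocation at $p_i$ carrying timestamp $T$ I would trace the confirmation-list mechanism and establish three facts. First, a confirmation list for $T$ is created at $p_i$ (either on receipt of its own $DeqReq$ or on the first $DeqAck$ it sees for $T$) and is never removed from $PendingDequeues$ before it is processed, since the only deletion sits inside the for loop guarded by all of its slots being $True$. Second, every slot of that list eventually becomes $True$: $p_i$ broadcasts a $DeqReq$ (line~\ref{fifoline:sendDeqReq}), so by reliable delivery each $p_j$ receives it and broadcasts a $DeqAck$ for $T$ (line~\ref{fifoline:sendDeqAck}), and again by reliable delivery $p_i$ receives such a $DeqAck$ from every $p_j$ and sets slot $j$ to $True$ (line~\ref{fifoline:setResponse}); together with the invoker's own slot, which $createList$ initializes to $True$, this fills the list. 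Crucially, $propagateEarlierResponses$ (line~\ref{fifoline:propagateEarlier}) only ever writes $True$, so it cannot undo this progress. Third, once all slots are $True$ the for loop reaches and consumes the list, and since $p_i = p_{inv}$ the guarded statement returns the value $ret$ (possibly $\bot$) to the user (line~\ref{fifoline:deqReturn}).

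I expect the main obstacle to be the third fact: arguing that completion of the list and its processing occur together, so that no extra external event is needed to trigger the response. The key observation is that whatever event fills the final slot is itself a $DeqAck$ receipt --- either the last $DeqAck$ for $T$, which sets the slot directly, or a $DeqAck$ for a later-timestamp instance, which sets it through $propagateEarlierResponses$ --- and the for loop runs in that very same handler, immediately after propagation. Since the loop scans all of $PendingDequeues$ in increasing timestamp order and the list for $T$ is still present by the first fact, the loop reaches it, finds every slot $True$, and executes the return. The $Enqueue$ case and the two reliable-delivery arguments are routine; the care lies entirely in coupling the moment the list becomes complete to the moment the for loop processes it.
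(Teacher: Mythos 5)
Your proof is correct and takes essentially the same route as the paper's: reliable delivery gets the broadcast request to every process, every process acknowledges, and receipt of all $n$ acknowledgments triggers the return on line~\ref{fifoline:enqReturn} or~\ref{fifoline:deqReturn}. Your $Dequeue$ case is in fact more careful than the paper's single-paragraph argument, which simply asserts that receiving all $n$ acknowledgments generates the return; your third fact --- that the event completing the confirmation list is itself a $DeqAck$ handler whose trailing for loop finds the still-present, now-full list and returns --- makes explicit a coupling the paper leaves implicit.
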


\begin{proof}
When a user invokes an operation at process $p_i$, the algorithm sends a message containing the invocation to all processes on line~\ref{fifoline:sendEnqReq} or \ref{fifoline:sendDeqReq}, depending on the operation invoked.  Each process will receive that message in finite time by our assumption of reliable channels and send back an $EnqAck$ for an $Enqueue$ invocation on line~\ref{fifoline:sendEnqAck} or a $DeqAck$ for a $Dequeue$ invocation on line~\ref{fifoline:sendSafetyFlag}, as appropriate.  Each of those responses will arrive in finite time, and when $p_i$ receives all $n$ of them, it will generate a matching return on line~\ref{fifoline:enqReturn} for $Enqueue$ or line~\ref{fifoline:deqReturn} for $Dequeue$.
\end{proof}

Each process reads its vector clock at each invocation, on line~\ref{fifoline:enqTS} or \ref{fifoline:deqTS}, and associates that clock value with that invocation throughout the algorithm.  We will refer to this clock value as the \emph{timestamp} of the instance containing that invocation.

\begin{construction}\label{constr:fifo}
  Let our linearization $\pi$ be the increasing lexicographic timestamp order of all operation instances in $R$.
\end{construction}

\begin{lemma}\label{fifolem:realTimeOrder}
  $\pi$ respects the real time order of non-overlapping instances.
\end{lemma}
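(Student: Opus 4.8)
The plan is to exploit the fact that the invocation timestamps are vector clocks, which increase strictly along causal (happens-before) chains, and to show that the round-trip acknowledgment scheme forces such a chain between any two non-overlapping instances. Fix two instances $op_1$ and $op_2$ where $op_1$'s response precedes $op_2$'s invocation in real time; say $op_1$ is invoked at $p_i$ and $op_2$ at $p_\ell$. I want to prove that $op_1$'s timestamp is $\prec op_2$'s timestamp, hence $<<$ it, which is exactly what places $op_1$ before $op_2$ in the order $\pi$ of Construction~\ref{constr:fifo}.

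The central claim is that $op_1$'s invocation event \emph{causally precedes} $op_2$'s invocation event in the Lamport happens-before sense. Granting this, the vector-clock behavior of $updateTS$ (it increments the local component and takes a componentwise maximum, so each clock is nondecreasing over local time and strictly dominates along any causal edge) gives that $op_1$'s timestamp is $\prec op_2$'s timestamp: the componentwise $\leq$ follows from monotonicity together with the $\max$ performed when $p_\ell$ absorbs $op_1$'s timestamp, and strictness follows from the increment $v_\ell[\ell] \pluseq 1$ executed when $op_2$'s timestamp is read on line~\ref{fifoline:enqTS} or line~\ref{fifoline:deqTS}, which differentiates the two vectors at index $\ell$. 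Since $\prec$ implies $<<$, this yields the lemma.

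To establish the causal precedence I would use the round-trip structure of the algorithm. Because $op_1$ returns only after $p_i$ has received acknowledgments from \emph{all} processes (lines~\ref{fifoline:enqReturn} and \ref{fifoline:deqReturn}, as in Lemma~\ref{fifolem:responses}), in particular $p_\ell$ must have received $op_1$'s request (line~\ref{fifoline:sendEnqReq} or \ref{fifoline:sendDeqReq}), folded $op_1$'s timestamp into its own clock (line~\ref{fifoline:enqReqTSUpdate} or \ref{fifoline:deqReqTSUpdate}), and sent its acknowledgment, all strictly before $op_1$'s response occurred. That acknowledgment-send at $p_\ell$ causally precedes $op_1$'s response (since $p_i$ receives it), which by hypothesis precedes $op_2$'s invocation in real time; as the acknowledgment-send and $op_2$'s invocation are both events of the single process $p_\ell$, the real-time order between them is precisely local program order. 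Hence $p_\ell$ had already absorbed $op_1$'s timestamp before it read the timestamp for $op_2$, giving the happens-before edge from $op_1$'s invocation to $op_2$'s invocation.

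The main obstacle is exactly this bridging step: converting the real-time hypothesis into a causal relationship, since in an asynchronous model real-time precedence is not in general causal. The argument succeeds only because the algorithm insists on a complete round trip --- $p_\ell$'s acknowledgment must reach $p_i$ before $op_1$ may return --- so that the two events at $p_\ell$ (receiving $op_1$'s request and invoking $op_2$) are ordered by local program order rather than by real time alone. I would also verify in passing the same-process case $\ell = i$: it is subsumed because $v_\ell[\ell]$ is incremented at every invocation and never decreases, so $op_2$'s timestamp again strictly exceeds $op_1$'s at index $\ell$.
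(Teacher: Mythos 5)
Your proposal is correct and follows essentially the same route as the paper's proof: both hinge on the observation that $op_1$ cannot return until it receives an acknowledgment from every process, in particular from $op_2$'s invoker, so that invoker has already folded $op_1$'s timestamp into its clock before invoking $op_2$, making $op_2$'s timestamp strictly (hence lexicographically) larger. The paper phrases this as a proof by contradiction and leaves the vector-clock monotonicity and the same-process case implicit, whereas you argue directly and spell those details out, but the underlying argument is identical.
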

\begin{proof}
  Proof by contradiction. Let there be two non-overlapping operation instances $op_1$ and $op_2$, where $op_1$ returns prior to $op_2$'s invocation in real time.  Assume, for the sake of contradiction, that $op_2$ is prior to $op_1$ in $\pi$.
  
  Given that $op_1$ occurs before $op_2$ in real time, and that the two operation instances are non-overlapping, $op_2$'s timestamp must be larger, lexicographically, than $op_1$'s.  This follows from the fact that $op_1$ will not return until it receives a response from every process, including $op_2$'s invoking process (lines~\ref{fifoline:enqReturn}, \ref{fifoline:deqReturn}.  This means that $op_2$'s invoking process received the invocation for $op_1$ and updated its timestamp (line~\ref{fifoline:enqReqTSUpdate} or \ref{fifoline:deqReqTSUpdate}) before $op_1$ returned, and thus before it invoked $op_2$, so $op_2$'s timestamp will be totally ordered after $op_1$'s, implying it is also lexicographically after.  Construction~\ref{constr:fifo} then places $op_1$ before $op_2$ in $\pi$, contradicting our assumption.
\end{proof}

Next, we consider how each process locally executes each $Dequeue$ instance.  We show that all processes do so in the same way, meaning that their replicas of the data structure move through similar sequences of states, and thus continue to behave properly.  Processes may locally execute $Enqueue$ instances at different times, but since we store those instances' timestamps with the values in the local replicas, no $Dequeue$ instance never delete a value from an $Enqueue$ instance later in the linearization.

\begin{lemma}\label{fifolem:prevLocalExec}
  When a process $p_i$ locally executes any $Dequeue$ instance $op$, for any instance $op'$ with a lexicographically smaller timestamp than $op$, $p_i$ has previously locally executed $op'$.
\end{lemma}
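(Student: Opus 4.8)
The plan is to split on the type of $op'$, an $Enqueue$ or a $Dequeue$ instance, and to prove first a common \emph{awareness} claim: at the moment $p_i$ locally executes $op$, it has already received a message announcing $op'$. I would argue this by contradiction from the FIFO channel assumption together with the clock update rule. Process $p_i$ executes $op$ only after $op$'s confirmation list is entirely $True$ (lines~\ref{fifoline:fullConfList}--\ref{fifoline:deqReturn}), which requires a $DeqAck$ for $op$ from every process, in particular from $p_a$, the invoker of $op'$. Suppose $p_i$ had not yet learned of $op'$; then it has not received $op'$'s request from $p_a$, so by FIFO order on the $p_a \to p_i$ channel $p_a$ must have sent its acknowledgment of $op$ before it invoked $op'$. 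But acknowledging $op$ runs $updateTS(t_{op})$ (line~\ref{fifoline:deqReqTSUpdate}), leaving every component of $v_a$ at least the corresponding component of $t_{op}$; since later events only increase the clock, $op'$'s timestamp would then dominate $t_{op}$ componentwise, giving $t_{op} \prec t_{op'}$ and hence $t_{op} << t_{op'}$, contradicting $t_{op'} << t_{op}$.

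For an $Enqueue$ instance $op'$ this essentially finishes the argument: a process locally executes an $Enqueue$ immediately upon receiving its $EnqReq$ (line~\ref{fifoline:executeEnq}), so awareness of $op'$ before executing $op$ is exactly the statement that $op'$ is already inserted in $lQueue$, i.e. already locally executed.

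The $Dequeue$ case is the main obstacle, and I would handle it through the bookkeeping of confirmation lists. First I would record two structural facts. (i) Every complete list is deleted by the for-loop of the handler in which it completes, so at the start of each $DeqAck$ handler every surviving list is incomplete. (ii) A handler for $Deq_{ts}$ sets $True$ directly only in $Deq_{ts}$'s list and then calls $propagateEarlierResponses(PendingDequeues, Deq_{ts})$, which transitively copies $True$ entries downward to all smaller-timestamp lists; consequently any list that first becomes complete in that handler has timestamp at most $Deq_{ts}$. Now when $p_i$ executes $op$ it does so inside some $DeqAck$ handler for a timestamp $Deq_{ts}$, and by (i) and (ii) that handler makes $op$'s list newly complete, so $t_{op} << Deq_{ts}$ or $t_{op} = Deq_{ts}$. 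By the awareness claim $op'$ is either already executed and removed, in which case we are done, or still present in $PendingDequeues$. In the latter case the propagation cascade from $Deq_{ts}$ passes through $op$'s position on its way down to $op'$'s, so $op'$ inherits all of $op$'s responses; since $op$'s list is complete, $op'$'s list becomes complete as well. As the loop processes lists in increasing lexicographic timestamp order and $t_{op'} << t_{op}$, $op'$ is then executed strictly before $op$. The two delicate points to nail down are the precise semantics of $propagateEarlierResponses$ — that after it runs each list holds the union of all $True$ entries of lists with equal-or-larger timestamp up through $Deq_{ts}$ — and fact (i), that no complete list ever survives a handler; together these give the monotone completeness of smaller-timestamp lists on which the ordering argument rests.
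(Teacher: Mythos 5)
You follow the same route as the paper: first an \emph{awareness} step showing that $p_i$ must have received $op'$'s request before $op$'s confirmation list can fill, proved from FIFO channels plus the clock-update rule, and then the confirmation-list bookkeeping showing that $op'$, once known to $p_i$, has a complete list whenever $op$ does and is executed first by the increasing-timestamp loop. Your facts (i) and (ii) --- that no complete list survives the handler in which it completes, and that a handler for $Deq_{ts}$ can only complete lists with timestamp at most $Deq_{ts}$, each lower list inheriting the union of the $True$ entries of the lists above it --- are actually spelled out more carefully than in the paper, which compresses that half of the argument into a sentence about line~\ref{fifoline:propagateEarlier}.

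The gap is in the premise of your awareness claim: a full confirmation list for $op$ does \emph{not} require $p_i$ to have received a $DeqAck$ \emph{for $op$} from every process. Entries in $op$'s list can be set by propagation from $DeqAck$s for dequeues with lexicographically \emph{larger} timestamps --- that is exactly what $propagateEarlierResponses$ does, as your own fact (ii) notes --- and the entry for $op$'s invoker is $True$ from creation ($createList$ in Algorithm~\ref{alg:helpers} sets $responses[invoker] = True$), so if $op'$ and $op$ happen to share an invoker, no ack from $p_a$ is needed at all. The paper states this step correctly: $p_i$ needs, from each process, a $DeqAck$ ``either for $op$ or for a $Dequeue$ instance with larger timestamp than $op$.'' This matters for your contradiction, because if the ack from $p_a$ is for some $op''$ with $t_{op} << t_{op''}$, then $updateTS(t_{op''})$ gives componentwise dominance of $t_{op'}$ over $t_{op''}$ only, not over $t_{op}$; the conclusion must instead be reached as $t_{op} << t_{op''} << t_{op'}$ by transitivity of lexicographic order, not via $t_{op} \prec t_{op'}$. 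The repair is routine --- run your FIFO argument on whichever message from $p_a$ actually set (or propagated into) entry $a$, and finish with lex-transitivity --- but as written the claim ``requires a $DeqAck$ for $op$ from every process'' is false of the algorithm, and the componentwise-dominance step fails along with it. (The same-invoker case, where entry $a$ never needs any ack, requires a separate argument entirely; the paper's own proof overlooks that case as well.)
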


\begin{proof}
  By line~\ref{fifoline:fullConfList}, $p_i$ will not locally execute $op$ until it has received a $DeqAck$ from every process, either for $op$ or for a $Dequeue$ instance with larger timestamp than $op$ (line~\ref{fifoline:propagateEarlier}).  But when any process sends a $DeqAck$, on line~\ref{fifoline:sendDeqAck}, it has updated its timestamp to be larger than $op$'s timestamp.  Thus, any instance invoked at that process after that point will have a strictly, and thus lexicographically, larger timestamp than $op$.  This means that any instance $op'$ invoked at $p_j$ with a lexicographically smaller timestamp than $op$ must have been invoked before its invoking process sent a $DeqAck$ for $op$ or any later $Dequeue$ instance.

  By FIFO message order, $p_j$'s request message for $op'$, which it sends on line~\ref{fifoline:sendEnqReq} or \ref{fifoline:sendDeqReq}, would have arrived at $p_i$ before the $DeqAck$ for $op$ from $p_j$, and $p_i$ would have either locally executed $op'$ on line~\ref{fifoline:executeEnq} if it was an $Enqueue$ instance, or put it in $PendingDequeues$ on line~\ref{fifoline:savePendingDeq} if it was a $Dequeue$ instance.  If it was a $Dequeue$ instance, the only time $p_i$ could remove $op'$ from $PendingDequeues$ is in lines~\ref{fifoline:fullConfList}-\ref{fifoline:deqReturn}, when $p_i$ locally executes $op'$.  Thus, when $p_i$ locally executes $op$, it must have already either locally executed $op'$, or have it in $PendingDequeues$.

  But $p_i$ can only locally execute $op$ when its confirmation list is full, and any time $p_i$ sets a response entry in $op$'s confirmation list to true on line~\ref{fifoline:setResponse}, it then propagates responses to all pending $Dequeue$ instances with smaller timestamps on line~\ref{fifoline:propagateEarlier}.  Thus, if $p_i$ has received a response from every process for $op$, it has also marked every $response$ cell in the confirmation list for $op'$, and will locally execute $op'$ before $op$.
\end{proof}

This means that during local execution of a $Dequeue$ instance, each process has already placed the arguments of all earlier-linearized $Enqueue$ instances in its replica of the queue and has already locally executed every $Dequeue$ instance with a smaller timestamp.  

\begin{corollary}\label{fifolem:localExecOrder}
  Each process locally executes all $Dequeue$ instances in increasing lexicographic timestamp order.
\end{corollary}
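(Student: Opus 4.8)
The plan is to derive this corollary directly from Lemma~\ref{fifolem:prevLocalExec} by specializing that lemma to pairs of $Dequeue$ instances and then promoting the resulting pairwise ordering to a global sorted order. First I would fix an arbitrary process $p_i$ and consider the sequence of events in which $p_i$ locally executes $Dequeue$ instances (line~\ref{fifoline:chooseDeqValue}). Lemma~\ref{fifolem:prevLocalExec} applies to \emph{any} instance $op'$ of lexicographically smaller timestamp, so in particular, taking both the executed instance $op$ and the earlier instance $op'$ to be $Dequeue$ instances, it tells us that whenever $p_i$ locally executes a $Dequeue$ instance $op$, every $Dequeue$ instance with a lexicographically smaller timestamp has already been locally executed by $p_i$.

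To turn this pairwise statement into the claim that the entire execution sequence is sorted, I would argue by contradiction. Suppose $p_i$ locally executes two $Dequeue$ instances $op_a$ and $op_b$ with $op_a$ executed earlier than $op_b$, yet $op_b$ carrying the lexicographically smaller timestamp. Applying Lemma~\ref{fifolem:prevLocalExec} at the moment $p_i$ executes $op_a$ (as the instance $op$), with $op_b$ playing the role of $op'$, forces $p_i$ to have already executed $op_b$ before $op_a$, contradicting the assumed order. Hence no inverted pair exists, so $p_i$'s sequence of $Dequeue$ executions contains no out-of-order pair and is therefore in increasing lexicographic timestamp order. Since $p_i$ was arbitrary, the same holds at every process.

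The only points needing care, and where the (mild) main obstacle lies, are well-definedness issues rather than the ordering argument itself. I would confirm that distinct instances carry distinct timestamps, a standard property of vector clocks that increment the local component at every event, so that lexicographic order ($<<$) is a strict total order on instance timestamps and ``increasing order'' is unambiguous. I would also note that each $Dequeue$ instance is locally executed exactly once at $p_i$, since it is deleted from $PendingDequeues$ at the moment of execution (lines~\ref{fifoline:fullConfList}--\ref{fifoline:deqReturn}), which makes ``the order in which $p_i$ executes $Dequeue$s'' a well-defined linear sequence. With these two observations in place the corollary is immediate, as the substantive work has already been carried out in Lemma~\ref{fifolem:prevLocalExec}.
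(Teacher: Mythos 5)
Your proposal is correct and matches the paper's approach: the paper states this as an unproved corollary precisely because it is an immediate consequence of Lemma~\ref{fifolem:prevLocalExec}, which is exactly the derivation you give (specializing the lemma to $Dequeue$--$Dequeue$ pairs and ruling out inversions). Your added care about timestamp uniqueness and once-only execution is a reasonable, if not strictly necessary, tightening of the same argument.
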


\begin{lemma}\label{fifolem:legal}
  $\pi$ is a legal sequence by the specification of a FIFO queue.
\end{lemma}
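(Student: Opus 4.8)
The plan is to prove legality by induction over the positions of $\pi$, maintaining a single invariant about the contents of each process's local replica at the moment it locally executes a $Dequeue$ instance. For a $Dequeue$ instance $op$ with timestamp $Deq_{ts}$, write $\rho$ for the prefix of $\pi$ consisting of all instances with lexicographically smaller timestamp. The invariant I would maintain is: \emph{when any process locally executes $op$, the set of entries in its $lQueue$ with timestamp smaller than $Deq_{ts}$ is exactly the set of arguments of unmatched $Enqueue$ instances in $\rho$}. Given this invariant the claim follows immediately, since $lQueue.dequeue(Deq_{ts})$ returns the smallest-timestamp such entry. Because $\pi$ orders instances by increasing timestamp, that entry is precisely the first unmatched $Enqueue$ in $\rho$ (matching case (2) of Definition~\ref{def:FIFOQueue}), or $\bot$ when no such entry exists, i.e.\ when every $Enqueue$ in $\rho$ is already matched (matching case (3)). $Enqueue$ instances impose no constraint, so case (1) is automatic, and the return values determined by the invoking processes on line~\ref{fifoline:deqReturn} are exactly the legal ones.

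To set up the induction I would first invoke Corollary~\ref{fifolem:localExecOrder}: every process executes $Dequeue$ instances in exactly the order they appear in $\pi$, so it suffices to verify the invariant once per position of $\pi$, uniformly across all processes. For the ``every smaller-timestamp $Enqueue$ is present'' half of the invariant, I would use Lemma~\ref{fifolem:prevLocalExec}: every instance with timestamp smaller than $op$---in particular every $Enqueue$ in $\rho$---has already been locally executed, hence inserted into $lQueue$ on line~\ref{fifoline:executeEnq}, by the time the process reaches $op$. The timestamp bound passed to $lQueue.dequeue$ is what lets me ignore any larger-timestamp $Enqueue$s that an eager process may already have inserted; they can never be returned and so do not disturb the argument.

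For the ``removed entries are exactly the matched ones'' half, I would apply the inductive hypothesis to each $Dequeue$ instance $op''$ of $\rho$. Such an $op''$ has timestamp smaller than $Deq_{ts}$, was executed before $op$ (again by Corollary~\ref{fifolem:localExecOrder}), and by the inductive hypothesis removed the first unmatched $Enqueue$ at its own point in $\pi$; moreover the entry it removed has timestamp below its own bound and hence below $Deq_{ts}$, so it falls within the window relevant to $op$. Summing over all $Dequeue$s of $\rho$ then shows that the entries removed from the relevant window of $lQueue$ are exactly the matched $Enqueue$s of $\rho$, leaving precisely the unmatched ones and closing the invariant. A convenient by-product is that this remaining set---and therefore the value returned---is identical at every process, which is the ``all replicas delete the same value'' fact flagged in the section overview.

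I expect the main obstacle to be the exact accounting that ties the two halves together, rather than any single deep step. I must ensure simultaneously that no $Dequeue$ in $\rho$ ever removes an entry with timestamp at or above $Deq_{ts}$ (guaranteed by its strictly smaller bound) and that no $Enqueue$ with timestamp below $Deq_{ts}$ is still missing from $lQueue$ at execution time (guaranteed by Lemma~\ref{fifolem:prevLocalExec}). Making these two guarantees align so that the equality ``remaining entries $=$ unmatched $Enqueue$s'' is \emph{exact}---neither over- nor under-counting across the asynchronous, per-process execution schedules---is the delicate part, and it is exactly where Corollary~\ref{fifolem:localExecOrder} and Lemma~\ref{fifolem:prevLocalExec} must be combined.
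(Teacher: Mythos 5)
Your proposal is correct and follows essentially the same route as the paper: induction on $\pi$, using Lemma~\ref{fifolem:prevLocalExec} to show all smaller-timestamp $Enqueue$s are already in $lQueue$ and Corollary~\ref{fifolem:localExecOrder} to show exactly the $Dequeue$s of $\rho$ have been executed, so the filtered window of $lQueue$ holds precisely the unmatched $Enqueue$ arguments of $\rho$. Your explicit ``window'' invariant is just a sharper packaging of what the paper argues inline (including the fact that all replicas remove the same value), so there is no substantive difference.
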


\begin{proof}
  We prove this claim by induction on $\pi$.  The empty sequence is legal.  Assume that in an arbitrary prefix $\rho \cdot op$ of $\pi$, $\rho$ is legal.  We will show that $rho \cdot op$ is also legal which, by induction, implies that $\pi$ is legal.  Let $p_i$ be the process which invoked $op$.

  If $op$ is an $Enqueue$ instance, then $\rho \cdot op$ is legal by Definition~\ref{def:FIFOQueue}.

  If $op$ is a $Dequeue$ instance, $op = Dequeue(-,ret)$, we must show that either $ret \neq \bot$ is the argument of the first unmatched $Enqueue$ instance in $\rho$ or there $ret = \bot$ and there are no unmatched $Enqueue$ instances in $\rho$.

  We begin with the case that $ret \neq \bot$.  This means that when $p_i$ chose $ret$ as the return value in line~\ref{fifoline:chooseDeqValue}, it found $ret$ in its local replica of the queue, with timestamp lexicographically less than $ts(op)$.  Thus, $ret$ was the argument of some $Enqueue$ instance in $\pi$, since the only values added to an $lQueue$ are $Enqueue$ arguments, in line~\ref{fifoline:executeEnq}.

  We next prove that $Enqueue(ret,-)$ is the first unmatched $Enqueue$ instance in $\rho$.  By Lemma~\ref{fifolem:prevLocalExec}, when $p_i$ locally executes $op$ and chooses its return value, it has already locally executed every $Enqueue$ instance with a lexicographically smaller timestamp.  By Construction~\ref{constr:fifo}, these are the $Enqueue$ instances in $\rho$.  Thus, the set of elements $p_i$ has available to choose are the arguments of $Enqueue$ instances in $\rho$.  Further, by Corollary~\ref{fifolem:localExecOrder}, $p_i$ has locally executed every $Dequeue$ instance with timestamps smaller than $op$, which are exactly those $Dequeue$ instances in $\rho$.

  Thus, $p_i$'s $lQueue$ contains the arguments of unmatched $Enqueue$ instances in $\rho$, as well as potentially the arguments of some $Enqueue$ instances which appear in $\pi$ after $op$.  Line~\ref{fifoline:chooseDeqValue} chooses the value in $lQueue$ with the smallest timestamp, filtering to only those with timestamp smaller than $op$, which is the argument of the first unmatched $Enqueue$ instance in $\rho$, since we constructed $\rho$ in timestamp order.  Similarly, when any other process locally executes $op$, they will delete the same value from their $lQueue$ by the same logic, simply not returning it to their user by the check on line~\ref{fifoline:deqReturn}.  

  The next case we consider is when $ret = \bot$.  In this case, when $p_i$ executes line~\ref{fifoline:chooseDeqValue}, it finds no element in $lQueue$ with smaller timestamp than $op$.  By the logic in the previous case, this means that every $Enqueue$ instance with timestamp smaller than $op$ is matched by a $Dequeue$ instance in $\rho$, so $\rho \cdot op$ is legal.

  Thus, by induction, $\pi$ is legal.
\end{proof}

\begin{theorem}
  Algorithm~\ref{alg:fifo} implements a FIFO queue.
\end{theorem}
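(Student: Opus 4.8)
The plan is to read ``implements a FIFO queue'' as ``is linearizable with respect to Definition~\ref{def:FIFOQueue},'' so that the theorem follows by assembling the preceding results into a single witnessing total order for an arbitrary admissible timed run $R$. Recall from the model section that linearizability demands two things of $R$: first, that every operation instance is complete, i.e., every invocation has a matching response; and second, that there is a total order on all instances which is legal by the ADT and which respects the real-time precedence of any two non-overlapping instances. The entire theorem is therefore a matter of pointing at the facts that establish each of these obligations.

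First I would discharge completeness by citing Lemma~\ref{fifolem:responses}, which already guarantees that every invocation in $R$ has a matching response; hence $R$ consists of complete operation instances to which we may try to assign a linearization. I would then take the candidate order to be $\pi$ from Construction~\ref{constr:fifo}, the increasing lexicographic ($<<$) order of the instances' timestamps. The two remaining obligations are then exactly the statements already proved: Lemma~\ref{fifolem:realTimeOrder} shows $\pi$ respects real-time order on non-overlapping instances, and Lemma~\ref{fifolem:legal} shows $\pi$ is legal by the FIFO queue specification (itself resting on Lemma~\ref{fifolem:prevLocalExec} and Corollary~\ref{fifolem:localExecOrder}). Combining these, $\pi$ is a valid linearization of $R$; since $R$ was arbitrary, the algorithm is linearizable and therefore implements a FIFO queue.

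Because all of the substance lives in the lemmas, the only gap the theorem itself must close is that $\pi$ is genuinely a \emph{total} order, so that ``increasing lexicographic order'' names a unique sequence. I expect this well-definedness to be the one point needing explicit care: I would argue that no two distinct instances share a timestamp, so that the comparison $<<$ is always decisive. Two instances invoked at the same process differ in that process's own clock component, since $updateTS$ increments it at each invocation (line~\ref{fifoline:enqTS} or \ref{fifoline:deqTS}); two instances at different processes receive distinct vectors by the standard uniqueness property of vector clocks, as equal vectors would force mutual causal precedence of the two invocation events, which is impossible. Hence $<<$ is a strict total order on the instance timestamps, $\pi$ is well defined, and the assembly above goes through.
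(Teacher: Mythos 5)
Your proposal is correct and takes essentially the same route as the paper: the paper's proof of this theorem simply cites Lemma~\ref{fifolem:realTimeOrder} and Lemma~\ref{fifolem:legal} to conclude that the $\pi$ of Construction~\ref{constr:fifo} is a valid linearization of the arbitrary admissible run $R$. Your two additions---explicitly invoking Lemma~\ref{fifolem:responses} for completeness of instances, and arguing that timestamps are pairwise distinct (via the local increment in $updateTS$ and the standard vector-clock uniqueness property) so that lexicographic order is a genuine total order---are points the paper leaves implicit, and both arguments you give for them are sound.
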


\begin{proof}
  By Lemma~\ref{fifolem:realTimeOrder}, $\pi$ respects the real-time order of non-overlapping instances and by Lemma~\ref{fifolem:legal}, $\pi$ is a legal sequence of operation instance on a FIFO queue.  Thus, every admissible timed run $R$ of Algorithm~\ref{alg:fifo} has a valid linearization, so it is a correct implementation.
\end{proof}

\subsection{Complexity}

As we proved in Lemma~\ref{fifolem:responses}, the algorithm returns to every invocation after it receives acknowledgments from every other process.  This takes up to $2d$ time, since the request message must reach each other process before it can send an acknowledgment, and each of these messages can take up to $d$ time.  This is equivalent to the best existing algorithm \cite{Lynch96}.

The lower bound from Wang et al. \cite{WangTalmageLeeWelch18} for pair-free operations applies here, since we use a less restricted model.  This bound implies that the worst-case cost of $Dequeue$ in an asynchronous model must be at least $4d/3$.  However, in an asynchronous system, a process has no way to know when a certain amount of time has elapsed.  The only possible timing knowledge is that when a message round trip completes, at most $2d$ time has elapsed.  Thus, our algorithm's runtime appears to be optimal.

\section{Asynchronous Out-of-Order Queues}

\subsection{Description}

We next move on to consider how we might be able to improve performance, despite the lower bound on $Dequeue$.  Relaxation has previously been useful for circumventing such lower bounds \cite{TalmageWelch14}, so we consider it here.  Similarly as for Algorithm~\ref{alg:fifo}, we update that existing algorithm to handle asynchrony, using relaxation to improve the common-case cost of $Dequeue$ and give better overall performance.  We split $Dequeue$ instances into two types: ``slow'' $Dequeue$s that behave like those in the FIFO algorithm above, taking $2d$ time to return, and ``fast'' $Dequeue$s which return a value immediately on execution, not waiting for the coordinating communication that slows down other operations.  To enable the algorithm to do this without violating legality, we partition the set of $k$ values which a $Dequeue$ instance can return and give each process ownership of a disjoint subset.  Then that process can safely $Dequeue$ any of its own elements locally, knowing that no other process will return them to a $Dequeue$.  When a process runs out of elements, its next $Dequeue$ instance will be slow, as it takes the time to coordinate with other processes to determine its return value.  We use that same time to assign ownership of more values, so that future $Dequeue$ instances at that process can be fast again.

The conversion to an asynchronous model is algorithmically similar to that for an unrelaxed queue, but the correctness proof is more involved.  We can no longer simply use timestamp order from our vector clocks to construct a linearization.  Since fast $Dequeue$ instances take no time at all, they must linearize when they happen in real time, which may not match logical timestamps.  Instead, we must insert each fast $Dequeue$ instance into our linearization and prove that its return value was, at that moment, one of the $k$ oldest values in the queue.  We must also prove that our construction respects the real time order of non-overlapping instances, which was much more straightforward for the unrelaxed queue.

Overall, if $k \geq n$, this means that at least half of $Dequeue$ instances can take no time at all for communication delay, vastly outperforming the unrelaxed queue.  If $k < n$, our algorithm falls back to the above algorithm for unrelaxed queues.  The worst case per-instance cost is identical between the algorithms, but the average performance with relaxation is significantly better.  One interesting aspect of this algorithm, that makes it significantly more complex than just adding relaxation handling as previously done in partially synchronous systems to Algorithm~\ref{alg:fifo} is that we can no longer rely solely on timestamp order for linearization.  Since fast $Dequeue$ instances return instantly, there is no guarantee that timestamp order correctly captures their real-time order.  Thus, we must construct a much more complex linearization order to satisfy real-time order, then prove that it is legal.  We rely on invariants related to ownership of elements to ensure that fast $Dequeue$ instances choose values that will be legal for them to return at their linearization points, even if those points are quite different than the timestamp order the code actually sees. 

\subsection{Algorithm}

Because this algorithm is built on Algorithm~\ref{alg:fifo}, we omit some portions that are identical to those in that algorithm.  For example, the code for handling $Enqueue$ instances is exactly the same as that on lines~\ref{fifoline:invEnq}-\ref{fifoline:finishEnq} of Algorithm~\ref{alg:fifo}, so we omit it here, as well as the helper functions.  The code for handling $Dequeue$ is similar to that above, but differentiates fast and slow instances, so we present it here. 

In addition to the functions we used before, we add the following functions to the variable storing each process' replica of the shared structure, which help with labeling elements for fast $Dequeue$:
\begin{itemize}
\item $lQueue.peekByLabel(p)$: Return, without removing, the oldest element in $lQueue$ labeled for process $p$, $\bot$ if there is none.
\item $lQueue.deqByLabel(p)$: Return and remove the oldest element in $lQueue$ labeled for process $p$.
\item $lQueue.deqUnlabeled(ts)$: Return and remove the oldest unlabeled element in $lQueue$ which has timestamp less than $ts$.
\item $lQueue.remove(val)$: Remove the specific value $val$ from $lQueue$.
\item $lQueue.unlabeledSize()$: Return the number of unlabeled elements in $lQueue$.
\item $lQueue.labelOldest(p,x)$: Label the $x$ oldest unlabeled elements in $lQueue$ for process $p$.
\end{itemize}
  Pseudocode appears in Algorithm~\ref{alg:relaxed}.

\begin{algorithm}
  \caption{Code for each process $p_i$ to implement a queue with $k$-Out-of-Order relaxed $Dequeue$.  Helper functions are as in Algorithm~\ref{alg:helpers}}\label{alg:relaxed}
  \begin{algorithmic}[1]
    \Function{Dequeue}{}
      \State $Deq_{ts} = updateTS()$
      \If {$localQueue.peekByLabel(p_{i}) \neq \bot$}\label{oooline:checkFast}
        \State $ret = localQueue.deqByLabel(p_i)$ \label{oooline:fastDeq}
        \State send $(Deq_f, ret, Deq_{ts}, i)$ to all processes
        \State \Return $ret$ to user \label{oooline:fastDeqResponse}
      \Else { } send $(Deq_s, null, Deq_{ts}, i)$ to all processes
      \EndIf
    \EndFunction

    \Function{Receive}{($op, val, Deq_{ts}, inv)$ from $p_{inv}$}
      \State $updateTS(Deq_{ts})$
      \If{$Deq_{ts}$ is not in $PendingDequeues$}
        \State $PendingDequeues.insertByTS(createList(op, Deq_{ts}, p_{inv}$))
      \EndIf
      \State send $(op, val, DeqAck, Deq_{ts}, p_{inv}, i)$ to all processes
    \EndFunction

    \Function{Receive}{$(op, val, DeqAck, Deq_{ts}, p_{inv}, j)$ from $p_j$}
      \If{$Deq_{ts}$ not in $PendingDequeues$}
        \State $PendingDequeues.insertByTS(createList(op, Deq_{ts}, p_{inv}))$
      \EndIf
      \State Let $currentConfList$ be the confirmation list in $PendingDequeues$ with $ts == Deq_{ts}$
      \State $currentConfList.responses[j] = True$
      \State $propagateEarlierResponses(PendingDequeues, Deq_{ts})$ 
      \For{each $confirmationList$ in $PendingDequeues$, in increasing lexicographic timestamp order}
        \If{all $responses$ in $confirmationList$ are $True$}\label{oooline:localExec}
          \If {$confirmationList.op == Deq_f$}
            \If{$p_i \neq p_{inv}$} $lQueue.remove(val)$\EndIf
          \Else    
              \State $ret = lQueue.deqUnlabeled(Deq_{ts})$ \label{oooline:sDeqChooseUnlabeled}
            \State $labelElements(p_{inv})$\label{oooline:label}
            \If{$p_i == p_{inv}$} \Return $ret$ to user
            \EndIf
          \EndIf 
        \EndIf 
      \EndFor
      \EndFunction
      \Function{labelElements}{$p_j$} from \cite{TalmageWelch14}
      \State $y = lQueue.unlabeledSize()$
      \State $lQueue.labelOldest(p_j,x)$, where $x = \min\{\lfloor k/n\rfloor, y\}$\label{oooline:labelOldest}
      \EndFunction
  \end{algorithmic}
\end{algorithm}

\subsection{Correctness}
The primary difference in the correctness proof for Algorithm~\ref{alg:relaxed} compare to that for the unrelaxed queue, is that the linearization must be more complex, to deal with fast $Dequeue$ instances.  Several lemmas from Section~\ref{sec:fifoCorrectness} still apply to $Enqueue$ and slow $Dequeue$ instances, though we need to extend them to account for fast $Dequeue$ instances.  Once we construct our linearization, the out of order behavior of these instances does not significantly change the correctness argument, it only expands our access to the structure.

Formally, a fast $Dequeue$ instance is one which passes the check on line~\ref{oooline:checkFast}.  Any $Dequeue$ instance which fails this check is slow.

Let $R$ be an arbitrary, admissible, timed run of Algorithm~\ref{alg:relaxed}.

\begin{lemma}
 In $R$, every invocation has a matching response.
\end{lemma}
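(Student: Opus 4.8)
The plan is to mirror the proof of Lemma~\ref{fifolem:responses}, splitting into cases by operation type and, for $Dequeue$, by whether an instance is fast or slow, since the only structural change in the relaxed algorithm is the branch on line~\ref{oooline:checkFast}. Because the $Enqueue$ code is identical to lines~\ref{fifoline:invEnq}--\ref{fifoline:finishEnq} of Algorithm~\ref{alg:fifo}, the argument for $Enqueue$ instances carries over verbatim: the invoker broadcasts an $EnqReq$, every process acknowledges in finite time by reliability of the channels, and after collecting all $n$ acknowledgments the invoker returns to its user.

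For $Dequeue$ I would first dispatch the fast case, which is immediate. A fast instance is, by definition, one that passes the check on line~\ref{oooline:checkFast}, and it returns to its user on line~\ref{oooline:fastDeqResponse} after only local computation, so it trivially has a matching response. The slow case I would handle exactly as in the FIFO argument. A slow $Dequeue$ broadcasts a $Deq_s$ request to all processes; each process receives it in finite time, inserts a confirmation list if needed, and broadcasts a $DeqAck$. Since $createList$ pre-sets the invoker's own entry of the $responses$ array to $True$, the invoker's confirmation list for this instance needs only the acknowledgments of the remaining processes, each of which arrives in finite time by reliable delivery. Once every entry is $True$, the check on line~\ref{oooline:localExec} passes, the instance is locally executed on line~\ref{oooline:sDeqChooseUnlabeled}, and the invoker returns its value to the user.

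The one point that needs care---and the closest thing to an obstacle---is verifying that a slow $Dequeue$ whose confirmation list has filled is in fact reached and executed by the for loop, even when other pending $Dequeue$ instances (fast or slow) sit in $PendingDequeues$ with incomplete lists. Here I would observe that the loop iterates over every confirmation list in increasing timestamp order and tests each one independently, so an incomplete earlier instance is simply skipped rather than blocking later ones; and the handler that completes the list, whether by a direct $DeqAck$ or through $propagateEarlierResponses$, runs the loop in the same activation, so execution follows completion without any further messages needing to arrive. I would also note that the $Deq_f$ messages a fast instance broadcasts are consumed by the remote processes, which merely remove the dequeued value from their replicas, and thus do not interfere with the already-completed return of the fast instance. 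Combining the three cases yields that every invocation in $R$ has a matching response.
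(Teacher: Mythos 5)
Your proposal is correct and follows essentially the same route as the paper's proof: the paper also splits by operation type, defers $Enqueue$ and slow $Dequeue$ instances to Lemma~\ref{fifolem:responses} since that code is unchanged, and observes that fast $Dequeue$ instances return immediately on line~\ref{oooline:fastDeqResponse}. Your added care about the acknowledgment-handler's loop reaching a newly completed confirmation list in the same activation is a sound elaboration of what the paper leaves implicit, not a different argument.
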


\begin{proof}
  $Enqueue$ and slow $Dequeue$ instances are exactly as in Algorithm~\ref{alg:fifo}, so we omit the proof here, referring the reader to Lemma~\ref{fifolem:responses}.  Fast $Dequeue$ instances will necessarily generate a return on line~\ref{oooline:fastDeqResponse}.  Thus every invocation has a matching response.
\end{proof}

Since we now have operation instances, we can construct a permutation of all instances in $R$.  We then prove that this permutation is a valid linearization, which shows the correctness of our algorithm.  We define operation instance timestamps as for the previous algorithm.

\begin{construction}\label{constr:relaxed}
  Construct an order of all operation instances in $R$ as follows:
  \begin{enumerate}
  \item Let $\pi$ be the increasing lexicographic order of all $Enqueue$ and slow $Dequeue$ instances. \label{orderSlow}
  \item For each fast $Dequeue$ $op$, in increasing real-time invocation order, let
    \begin{itemize}
    \item $\rho$ be the subsequence of all $Enqueue$ and slow $Dequeue$ instances which return before $op$'s invocation;
    \item $\sigma$ be the subsequence of all fast $Dequeue$ instances already in $\pi$ which return before $op$'s invocation;
    \item $\tau$ be the subsequence of all $Enqueue$ and slow $Dequeue$ instances invoked after $op$ returns.
    \end{itemize}
    Place $op$ in $\pi$ immediately after the later of the last elements of $\rho$ and $\sigma$.\label{orderFast}
  \end{enumerate}
\end{construction}

We need to prove that this sequence respects the real-time order of all non-overlapping operation instances.  We first note that the subsequence defined in step~\ref{orderSlow} respects real-time order by the argument in Lemma~\ref{fifolem:realTimeOrder}.

\begin{lemma}\label{ooolem:fastDeqOrdering}
  When we place each fast $Dequeue$ instance $op$ in step~\ref{orderFast} of Construction~\ref{constr:relaxed}, it precedes, in $\pi$, all elements of $\tau$.
\end{lemma}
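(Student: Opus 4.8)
The plan is to prove the claim by induction on the real-time invocation order of fast $Dequeue$ instances, which is exactly the order in which Construction~\ref{constr:relaxed} processes them in step~\ref{orderFast}; the inductive statement I carry is precisely the lemma itself, i.e., that each fast $Dequeue$ is placed before every element of its own $\tau$ set. The key preliminary fact, which needs no induction, is that every element of $\rho$ precedes every element of $\tau$ in $\pi$. Both $\rho$ and $\tau$ consist only of $Enqueue$ and slow $Dequeue$ instances, and by the remark following Construction~\ref{constr:relaxed} the restriction of $\pi$ to such instances respects real time (the argument of Lemma~\ref{fifolem:realTimeOrder}). If $a \in \rho$ and $b \in \tau$, then $a$ returns before $op$'s invocation while $b$ is invoked after $op$ returns; since $op$'s invocation precedes its own return, $a$ returns before $b$ is invoked, so $a$ and $b$ are non-overlapping with $a$ first, and hence $a$ precedes $b$ in $\pi$. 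In particular the last element of $\rho$ lies before every element of $\tau$.

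Next I would handle $\sigma$. Let $op'$ be the last element of $\sigma$, so $op'$ is a fast $Dequeue$ that returns before $op$'s invocation; since returns follow invocations, $op'$ was invoked before $op$ and was therefore already placed, so the inductive hypothesis applies to it. Consider any $b \in \tau$: it is invoked after $op$ returns, which is after $op$'s invocation, which in turn is after $op'$ returns; hence $b$ is invoked after $op'$ returns, and since $b$ is an $Enqueue$ or slow $Dequeue$ instance it belongs to the set $\tau'$ that was associated with $op'$ when $op'$ was placed. The inductive hypothesis for $op'$ then gives that $op'$ precedes $b$ in $\pi$. As $b \in \tau$ was arbitrary, the last element of $\sigma$ also lies before every element of $\tau$.

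Combining the two cases, the later of the last elements of $\rho$ and $\sigma$ lies before every element of $\tau$ in $\pi$ as it stands when $op$ is inserted. Construction~\ref{constr:relaxed} places $op$ immediately after that element, so $op$ is inserted at a position preceding every element of $\tau$, which is the claim. The base case, the first fast $Dequeue$ in invocation order, is immediate: $\sigma$ is empty and the placement reduces to inserting immediately after the last element of $\rho$, already shown to precede all of $\tau$.

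The main obstacle I anticipate is the bookkeeping in the $\sigma$ step, namely verifying that the $\tau$ set of the current instance $op$ is contained in the $\tau'$ set of each earlier fast $Dequeue$ in $\sigma$, so that the inductive hypothesis is genuinely applicable; this is where the precise timing relations ($op'$ returns before $op$ is invoked, and $\tau$ contains only slow instances) must be deployed carefully. A secondary point worth stating explicitly is that inserting $op$ \emph{immediately after} an element that precedes all of $\tau$ really does leave $op$ before all of $\tau$: the insertion point sits no later than the first $\tau$ element, so $op$ cannot be slotted past any element of $\tau$.
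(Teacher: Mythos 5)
Your proof is correct and takes essentially the same approach as the paper: the elements of $\rho$ precede those of $\tau$ because the slow subsequence respects real time, and the elements of $\sigma$ precede those of $\tau$ by reducing to previously placed fast $Dequeue$ instances, grounded in the first one ever placed. Your explicit induction on placement order, using the lemma itself as the inductive hypothesis together with the containment $\tau \subseteq \tau'$, is simply a cleaner formalization of the paper's informal ``repeating all the way back to the first fast $Dequeue$'' regression.
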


\begin{proof}
  First, note that all elements of $\rho$ precede all elements of $\tau$ in $\pi$, by step~\ref{orderSlow}.  Next, we prove that all elements of $\sigma$ also precede all elements of $\tau$ in $\pi$.

  Since every element of $\tau$ is invoked after $op$ returns and every element of $\sigma$ returns before $op$ is invoked, every element of $\sigma$ returns before any element of $\tau$ is invoked.  Thus, an element $op'$ of $\sigma$ only follows an element of $\tau$ in $\pi$ if some previously-placed fast $Dequeue$ instance (in the $\sigma'$ defined when we placed $op'$) was placed in $\pi$ later than that element of $\tau$.  This argument holds in turn for that previously-placed fast $Dequeue$ instance, repeating all the way back to the first fast $Dequeue$ instance which Step~\ref{orderFast} placed in $\pi$.  That could only be placed after an instance in $op$'s $\tau$ if a previous one was, but there was no previous one, so neither that first fast $Dequeue$ nor any subsequent one could be placed in $\pi$ after an element of $\tau$.  Thus, $op$ precedes every element of $\tau$ in $\pi$.
\end{proof}

\begin{lemma}
  Construction~\ref{constr:relaxed} respects the real-time order of non-overlapping instances in $R$.
\end{lemma}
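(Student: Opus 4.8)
The plan is to pick two arbitrary non-overlapping instances $op_1$ and $op_2$ with $op_1$ returning before $op_2$'s invocation in real time, and to show that $op_1$ precedes $op_2$ in the final $\pi$ by a case analysis on whether each of the two is fast or non-fast. Throughout, I will call an instance \emph{non-fast} if it is an $Enqueue$ or slow $Dequeue$ instance; these are exactly the instances ordered in step~\ref{orderSlow}, whose real-time order we have already noted is correct via the argument of Lemma~\ref{fifolem:realTimeOrder}. The four cases are then each resolved by one of the already-established facts, so the work is mostly careful bookkeeping with the sets $\rho$, $\sigma$, $\tau$ of Construction~\ref{constr:relaxed}.

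First, if $op_1$ and $op_2$ are both non-fast, they lie in the step~\ref{orderSlow} subsequence and are correctly ordered by Lemma~\ref{fifolem:realTimeOrder}. Second, if $op_1$ is fast and $op_2$ is non-fast, then $op_2$, being an $Enqueue$ or slow $Dequeue$ invoked after $op_1$ returns, belongs to the set $\tau$ defined when step~\ref{orderFast} placed $op_1$; Lemma~\ref{ooolem:fastDeqOrdering} gives that $op_1$ precedes every element of $\tau$, hence precedes $op_2$. Third, if $op_1$ is non-fast and $op_2$ is fast, then $op_1$ is an $Enqueue$ or slow $Dequeue$ returning before $op_2$'s invocation, so $op_1$ belongs to the set $\rho$ defined when placing $op_2$; since step~\ref{orderFast} places $op_2$ immediately after the later of the last elements of $\rho$ and $\sigma$, it lands after the last element of $\rho$ and therefore after $op_1$. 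Fourth, if both are fast, I first observe that $op_1$'s invocation precedes $op_2$'s invocation (it precedes $op_1$'s own return, which precedes $op_2$'s invocation), so step~\ref{orderFast} processes $op_1$ before $op_2$; thus $op_1$ is already in $\pi$ when $op_2$ is placed, and since $op_1$ is fast and returns before $op_2$'s invocation, $op_1 \in \sigma$. Placing $op_2$ after the last element of $\sigma$ again puts it after $op_1$. In every case I also invoke the stability observation that inserting a later fast $Dequeue$ ``immediately after'' some position preserves the relative order of all previously placed instances, so the relation established for $op_1$ and $op_2$ persists to the final $\pi$.

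I expect the main obstacle to be conceptually light but requiring precision: justifying each membership claim ($op_2 \in \tau$, $op_1 \in \rho$, $op_1 \in \sigma$) directly from the exact real-time inequalities in the definitions of $\rho$, $\sigma$, and $\tau$, and confirming that ``immediately after the later of the last elements of $\rho$ and $\sigma$'' genuinely places $op_2$ after \emph{every} element of both $\rho$ and $\sigma$, not merely after one of them. The genuinely delicate direction — that a fast instance placed in this way never jumps ahead of a $\tau$-instance it is required to precede — has already been isolated and discharged in Lemma~\ref{ooolem:fastDeqOrdering}, so here I am only assembling that result together with the construction rule into the full real-time guarantee.
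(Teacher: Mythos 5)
Your proposal is correct and follows essentially the same argument as the paper: a case analysis on whether each of $op_1$, $op_2$ is fast or non-fast, using Lemma~\ref{fifolem:realTimeOrder} for the both-non-fast case, membership of $op_1$ in $\rho$ or $\sigma$ plus the placement rule of step~\ref{orderFast} when $op_2$ is fast, and Lemma~\ref{ooolem:fastDeqOrdering} when $op_1$ is fast and $op_2$ is not. Your added remark that later insertions preserve the relative order of previously placed instances is a small bookkeeping point the paper leaves implicit, but it does not change the structure of the argument.
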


\begin{proof}
  Consider any two operation instances $op_1$ and $op_2$ in $R$ which do not overlap in real time.  We break the proof down by cases for whether each of $op_1$ and $op_2$ is a fast $Dequeue$ instance or a slow instance: an $Enqueue$ or slow $Dequeue$.
  \begin{itemize}
  \item Both $op_1$ and $op_2$ are slow instances.  This follows from Lemma~\ref{fifolem:realTimeOrder}.
  \item Both $op_1$ and $op_2$ are fast $Dequeue$ instances.  Order follows by construction.  We place the later, in real time order, of $op_1$ and $op_2$ second, and by the definition of $\sigma$ in step~\ref{orderFast} of Construction~\ref{constr:relaxed}, $op_1$ is in $\sigma$, so we place $op_2$ in $\pi$ after $op_1$.
  \item One of $op_1$ and $op_2$ is a fast $Dequeue$ instance, the other is a slow instance.  When we place the fast instance (WLOG, $op_1$) in $\pi$, since $op_1$ and $op_2$ do not overlap, then if $op_2$ precedes $op_1$ in real-time order, it was thus in $\rho$, so we place $op_1$ after $op_2$ in $\pi$ .  If $op_1$ precedes $op_2$ in real-time order, then $op_2$ is in $\tau$, and Lemma~\ref{ooolem:fastDeqOrdering} proves that $op_1$ precedes $op_2$ in $\pi$.
  \end{itemize}

  Thus, in all cases, $\pi$ respects the real-time order of any pair of instances $op_1$ and $op_2$.
\end{proof}

Now that we have a linearization of all instances in our run, we need to show that the return values the algorithm chooses are legal.  $Enqueue$ and slow $Dequeue$ instances behave as in the unrelaxed case, with the only difference being how a slow $Dequeue$ chooses its return value.  Now, it will skip elements labeled for other processes and take the oldest unlabeled element.  Since we label at most $\lfloor n/k\rfloor$ elements per process and a $Dequeue$ is only slow when its invoking process has no labeled elements, this will still leave it choosing from among the $k$ oldest elements in the queue, which is legal by the definition of the relaxed queue.  We begin our legality proof by restating Lemma~\ref{fifolem:prevLocalExec} and its corollary in the context of the related queue implementation.  We do not re-prove it, as the proof is fundamentally identical.  We note that, despite fast $Dequeue$ instances removing their return value from the invoking process' $lQueue$ in line~\ref{oooline:fastDeq}, we still refer to local execution of the instance as the code starting when line~\ref{oooline:localExec} passes.

\begin{lemma}
  When a process $p_i$ locally executes any $Dequeue$ instance $op$, for any instance $op'$ with a lexicographically smaller timestamp than $op$, $p_i$ has previously locally executed $op'$.
\end{lemma}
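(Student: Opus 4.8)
The plan is to transfer the proof of Lemma~\ref{fifolem:prevLocalExec} essentially verbatim, since the confirmation-list machinery that drives it---broadcasting a request on invocation, replying with $DeqAck$s, inserting instances into $PendingDequeues$, calling $propagateEarlierResponses$, and locally executing an instance only once its confirmation list is full---is structurally unchanged in Algorithm~\ref{alg:relaxed}. The first thing I would pin down is the meaning of ``locally executes'' for both flavors of $Dequeue$: as the remark preceding the statement specifies, it is the moment line~\ref{oooline:localExec} passes (a full confirmation list), regardless of whether $op$ is fast or slow, and regardless of the fact that a fast instance has already returned to its user on line~\ref{oooline:fastDeqResponse}. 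Once this is fixed, the entire claim concerns only the \emph{order} in which confirmation lists become full, which the fast/slow split does not affect.

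The core chain of reasoning I would reproduce is as follows. Because $p_i$ only locally executes $op$ once $op$'s confirmation list is full, it has received a $DeqAck$ from every process, either for $op$ itself or, through $propagateEarlierResponses$, for some $Dequeue$ with a lexicographically larger timestamp. Any process sending such a $DeqAck$ has first updated its clock strictly above $op$'s timestamp, so every instance it invokes afterward carries a lexicographically larger timestamp than $op$. Hence any instance $op'$ with a smaller timestamp than $op$, invoked at some $p_j$, was invoked---and its request broadcast---before $p_j$ sent that $DeqAck$. FIFO channels then guarantee $p_i$ received $op'$'s request before the $DeqAck$, so $p_i$ had already either inserted $op'$ into $lQueue$ (if $op'$ is an $Enqueue$) or placed it in $PendingDequeues$ (if $op'$ is a $Dequeue$, fast or slow), and the only way to remove a $Dequeue$ from $PendingDequeues$ is to locally execute it. Finally, since $propagateEarlierResponses$ marks every response cell of each earlier-timestamp pending instance whenever $op$'s list completes, $op'$'s list is full as well, so $p_i$ executes $op'$ before $op$.

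The main thing to verify---and the only place the fast/slow distinction could bite---is that fast $Dequeue$ instances genuinely participate in this machinery despite returning to their user immediately. I would check the unified $Receive$ handlers of Algorithm~\ref{alg:relaxed} to confirm that a $Deq_f$ request is broadcast on invocation, is inserted into $PendingDequeues$ at every process exactly as a slow request is, and elicits $DeqAck$s that raise the senders' clocks above its timestamp; all of these hold because the handlers treat $op \in \{Deq_f, Deq_s\}$ identically up to the local-execution branch. The one genuine difference is that at local execution a fast instance removes a specific value via $lQueue.remove$ at non-invoking processes (the value dequeued on line~\ref{oooline:fastDeq}) rather than performing a priority dequeue, and its invoker has already responded; but this lemma constrains only the order of local executions, which is fixed entirely by when confirmation lists fill, so the differing \emph{effect} of a local execution is irrelevant here. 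I therefore expect the obstacle to be bookkeeping---confirming that both request types are handled uniformly by the message machinery---rather than any new conceptual step.
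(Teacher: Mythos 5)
Your proposal matches the paper exactly: the paper does not re-prove this lemma but simply states that the proof of Lemma~\ref{fifolem:prevLocalExec} carries over unchanged, noting (as you do) that ``local execution'' of a fast $Dequeue$ still means the point where line~\ref{oooline:localExec} passes, even though the value was already returned to the user. Your reproduction of the confirmation-list argument and your check that $Deq_f$ and $Deq_s$ requests are handled identically by the message machinery is precisely the justification the paper leaves implicit.
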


\begin{corollary}\label{ooolem:localExecOrder}
  Each process locally executes all $Dequeue$ instances in increasing lexicographic timestamp order.
\end{corollary}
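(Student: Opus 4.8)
The plan is to obtain the corollary as an immediate consequence of the preceding lemma (the restatement of Lemma~\ref{fifolem:prevLocalExec} in the relaxed setting), with essentially no additional machinery. That lemma asserts that whenever a process $p_i$ locally executes a $Dequeue$ instance $op$, it has already locally executed every instance with a lexicographically smaller timestamp. Specialising ``every instance'' to ``every $Dequeue$ instance with a smaller timestamp'' already contains all the ordering content we need; the corollary is merely a restatement of that fact in terms of the order of executions rather than in terms of execution prerequisites.

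Concretely, I would fix a process $p_i$ and consider any two $Dequeue$ instances $op_1$ and $op_2$ that $p_i$ locally executes, with $ts(op_1) << ts(op_2)$. Applying the preceding lemma to $op_2$, and observing that $op_1$ is an instance with a lexicographically smaller timestamp than $op_2$, I conclude that $p_i$ locally executed $op_1$ before $op_2$. Since $op_1$ and $op_2$ were an arbitrary such pair, $p_i$'s local executions of $Dequeue$ instances occur in increasing lexicographic timestamp order. This argument covers fast and slow $Dequeue$ instances uniformly, since we defined the local execution of a fast $Dequeue$ to be the code reached when the check on line~\ref{oooline:localExec} passes, exactly as for a slow instance.

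I do not expect a genuine obstacle here; the substantive work is already discharged by the preceding lemma, so the only point requiring a remark is that ``increasing lexicographic timestamp order'' is a well-defined total order on the instances $p_i$ executes. This holds because distinct operation instances carry distinct vector-clock timestamps (a standard property of vector clocks, which the paper's linearization already relies upon), and $<<$ totally orders distinct vectors, since any two unequal vectors first differ at some index, fixing which one is smaller. Hence the pairwise comparisons above induce a single linear order with no circularity, and speaking of ``the order in which $p_i$ executes its $Dequeue$ instances'' is unambiguous. I therefore expect the proof to occupy only a short paragraph, its brevity inherited directly from the lemma above.
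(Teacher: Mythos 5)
Your proposal is correct and matches the paper's approach exactly: the paper states this corollary with no separate proof, treating it precisely as you do---an immediate specialization of the preceding lemma (applied to the later-timestamped of any two $Dequeue$ instances), with the same convention that a fast $Dequeue$'s local execution is the code reached when line~\ref{oooline:localExec} passes. Your added remark that distinct instances carry distinct timestamps, making $<<$ a total order, is a reasonable explicit note of something the paper leaves implicit (its Construction already depends on it).
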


We can now prove the following invariant of our labeling scheme that will enable us to argue that all $Dequeue$ instances return legal values.

\begin{lemma}\label{ooolem:labelOwner}
  Once line~\ref{oooline:labelOldest} labels an element $x$ for $p_i$, no other process will return $x$ to a $Dequeue$ it invoked.
\end{lemma}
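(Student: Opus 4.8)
The plan is to reduce the statement to two properties of the labeling scheme: \emph{consistency} (every process holding $x$ assigns it the label $p_i$, never a different owner) and \emph{persistence} (once $x$ carries the label $p_i$, that label is never altered until $x$ is removed on behalf of $p_i$). Granting these, the lemma follows by inspecting the only two ways a process $p_j$ with $j \neq i$ can produce a return value for a $Dequeue$ it invokes. A fast $Dequeue$ at $p_j$ returns the value of \emph{deqByLabel}$(p_j)$ on line~\ref{oooline:fastDeq}, which by definition yields only an element labeled for $p_j$; since $x$ is labeled for $p_i \neq p_j$, it is not returned this way. A slow $Dequeue$ at $p_j$ returns the value of \emph{deqUnlabeled} on line~\ref{oooline:sDeqChooseUnlabeled}, which yields only \emph{unlabeled} elements; since $x$ is labeled, it is skipped. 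Hence no $Dequeue$ invoked at $p_j$ returns $x$.

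Persistence is the easy half. Labels are assigned only inside \emph{labelElements}, through \emph{labelOldest} on line~\ref{oooline:labelOldest}, and that call touches only \emph{unlabeled} elements, so it can never overwrite the owner already recorded on $x$. The only ways an element leaves a replica are \emph{deqByLabel}, \emph{deqUnlabeled}, or the \emph{remove} triggered when another process's fast $Dequeue$ is locally executed; I would observe that \emph{remove} deletes only the value that fast $Dequeue$ returned, which is labeled for that dequeue's invoker, so $x$ is deleted this way exactly when $p_i$ itself consumes it. Thus $x$ keeps the label $p_i$ at every replica until $p_i$ removes it.

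Consistency is the heart of the argument, and I would establish it by induction over the slow $Dequeue$ instances in increasing lexicographic timestamp order. By Corollary~\ref{ooolem:localExecOrder} every process locally executes these instances in the same order, and by the analogue of Lemma~\ref{fifolem:prevLocalExec}, when a process executes a slow $Dequeue$ $op$ of timestamp $T$ it has already applied every $Enqueue$ and executed every $Dequeue$ of smaller timestamp. Consequently the portion of each process's $lQueue$ holding elements of timestamp below $T$, together with the labels those elements already carry by the inductive hypothesis, is identical across processes. Since \emph{deqUnlabeled}$(T)$ removes the same oldest-unlabeled below-$T$ element everywhere, and \emph{labelElements} calls \emph{labelOldest} for the common invoker $p_{inv}$, every process labels exactly the same elements; in particular $x$ receives the label $p_i$ at every replica that holds it.

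The subtle point, and what I expect to be the main obstacle, is that \emph{labelOldest} carries no timestamp bound, whereas processes need \emph{not} agree on which $Enqueue$ instances of timestamp \emph{larger} than $T$ they have already applied, since $Enqueue$s are applied immediately and asynchronously on receipt. If, at the moment $op$ executes, a process holds fewer than $\lfloor k/n \rfloor$ unlabeled below-$T$ elements, \emph{labelOldest} would ``spill over'' into this inconsistent high-timestamp region, and two processes could then disagree on $x$'s owner, defeating exactly the consistency the lemma requires. I would close this gap with an auxiliary counting invariant showing that whenever a $Dequeue$ is slow the pool of unlabeled below-$T$ elements is large enough that labeling never reaches the high-timestamp region; the intuition is that a $Dequeue$ is slow only once its invoker's share of labels is exhausted, which bounds how depleted the common low-timestamp pool can be. Proving that counting invariant precisely, rather than the label-blocking and persistence steps, is where I expect the real work to lie.
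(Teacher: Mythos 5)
Your opening paragraph is, by itself, the paper's entire proof: the paper argues exactly that a $Dequeue$'s return value arises only from $lQueue.deqByLabel(p_{inv})$ on line~\ref{oooline:fastDeq}, which yields only elements labeled for the invoker, or from $lQueue.deqUnlabeled$ on line~\ref{oooline:sDeqChooseUnlabeled}, which yields only unlabeled elements, and it stops there. The consistency and persistence properties you then set out to prove are not part of the paper's argument for this lemma at all; cross-replica agreement on labels is deferred to Claim~\ref{ooolem:safeLabels}, inside the later legality induction. You are right, however, that the short argument silently treats ``labeled for $p_i$'' as a global fact, whereas labels live in per-replica state and the lemma is a global, temporal claim, so something like your consistency property genuinely is required. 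Your diagnosis of where the difficulty sits (the absence of a timestamp bound on line~\ref{oooline:labelOldest}) is exactly on target.

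The gap is that your plan for closing it cannot be carried out: the counting invariant you defer is false, not merely hard. That a $Dequeue$ is slow says only that its invoker owns no labeled elements; it gives no lower bound on the pool of unlabeled below-timestamp elements, so spill-over really happens. Concretely, take $n=3$, $k=6$ (so $\lfloor k/n\rfloor = 2$): $p_0$ enqueues $a_1,a_2$ (both fully replicated); $p_1$ invokes a slow $Dequeue$ $op_1$, then $p_2$, after receiving $op_1$'s request, invokes a slow $Dequeue$ $op_2$ with lexicographically larger timestamp; then $p_0$, after receiving both requests, enqueues $b$, whose timestamp exceeds both. If $p_0$ receives its own $EnqReq$ for $b$ before it locally executes $op_1$, it dequeues $a_1$ and labels $\{a_2,b\}$ for $p_1$ on line~\ref{oooline:labelOldest}. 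Meanwhile at $p_2$, FIFO channels permit $b$ to arrive after $p_2$ locally executes $op_1$ (so $p_2$ labels only $a_2$ for $p_1$) but before $p_2$ locally executes $op_2$ (so $op_2$ returns $\bot$ and line~\ref{oooline:label} labels $b$ for $p_2$). A subsequent fast $Dequeue$ at $p_2$ then passes line~\ref{oooline:checkFast} and returns $b$, even though line~\ref{oooline:labelOldest} had already labeled $b$ for $p_1$ at $p_0$. So the lemma's conclusion fails in an admissible run: what you found is not a missing step in your own write-up but a bug in the algorithm, one that also undermines the paper's proof, Claim~\ref{ooolem:safeLabels}, and legality itself ($b$ can be returned to both $p_1$ and $p_2$). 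The repair must be algorithmic, e.g.\ restricting line~\ref{oooline:labelOldest} to elements with timestamps smaller than the executing slow $Dequeue$'s timestamp, for which replica agreement does follow from Corollary~\ref{ooolem:localExecOrder} and the induction you sketch.
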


\begin{proof}
  The algorithm chooses $Dequeue$ return values in two places: line~\ref{oooline:fastDeq} at a fast $Dequeue$'s invocation or on line~\ref{oooline:sDeqChooseUnlabeled} in local execution of a slow $Dequeue$ at its invoking process.  The latter of these will never return a labeled element, so the claim vacuously holds there.  In the first option, it chooses one labeled for the instance's invoking process, and we have the claim.
\end{proof}

We can further observe that the code never removes a label, so once processes label a particular element for $p_i$, they will only remove it as the return value of a $Dequeue$ instance invoked at $p_i$.

\begin{lemma}
  $\pi$ is a legal sequence of operation instances, by the definition of a queue with $k$-Out-of-Order relaxed $Dequeue$.
\end{lemma}

\begin{proof}
  We proceed by induction on $\pi$ and prove the following claim in tandem with and to aid in the primary result:

  \begin{claim}\label{ooolem:safeLabels}
    Any $Dequeue$ instance $op$ which labels elements during its local execution (line~\ref{oooline:label}) labels only elements which are among the arguments of the $k$ oldest $Enqueue$ instances in the prefix of $\pi$ ending with that $Dequeue$ instance.  All processes label the same elements during local execution of $op$.
  \end{claim}

  The empty sequence is legal.  We assume that for an arbitrary prefix $\rho \cdot op$ of $\pi$, $\rho$ is legal and Claim~\ref{ooolem:safeLabels} holds for all instances in $\rho$.  We will show that $\rho \cdot op$ is also legal and any element $op$ labels is the argument to one of the first $k$ unmatched $Enqueue$ instances in $\rho \cdot op$.  Let $p_i$ be $op$'s invoking process and proceed by cases on $op$'s operation type:
  \begin{itemize}
  \item $op$ is an $Enqueue$ instance.  Since $Enqueue$ has no return value, $\rho \cdot op$ is always legal.
  \item $op$ is a slow $Dequeue$ instance, $op = Dequeue(-,ret)$.  $p_i$ chose $ret$ from its $lQueue$ in line~\ref{oooline:sDeqChooseUnlabeled}.  
    If $ret \neq \bot$, then we know that it chose the argument of an unmatched $Enqueue$ instance in $\rho$, since the algorithm puts only $Enqueue$ arguments in $lQueue$, line~\ref{oooline:sDeqChooseUnlabeled} only returns values with timestamps below $ts(op)$, and Corollary~\ref{ooolem:localExecOrder} implies that $p_i$ has already locally executed any $Dequeue$ instance in $\rho$, removing its return value from $lQueue$.  Further, $deqUnlabeled$ chooses the oldest eligible element currently in $lQueue$.  By line~\ref{oooline:labelOldest}, no more than $\lfloor k/n\rfloor$ elements are labeled for each process and there are no elements labeled $p_i$, or $op$ would be a fast $Dequeue$.  Thus, there are fewer than $k$ labeled elements in $lQueue$ and the oldest unlabeled element is among the $k$ oldest elements in $lQueue$ and therefore the argument of one of the $k$ oldest $Enqueue$ instances in $\rho$, and $\rho \cdot op$ is legal.

    If $ret = \bot$, then there were no unlabeled elements in $lQueue$ with timestamp smaller than $ts(op)$, which means there are fewer than $k$ unmatched $Enqueue$ instances in $\rho$, and $\rho \cdot op$ is legal.

    By Corollary~\ref{ooolem:localExecOrder}, every process locally executes all $Dequeue$ instances in the same order, applying the same deterministic logic.  The only possible differences are that any $p_j$ may have fewer elements labeled for itself, as it has returned them to fast $Dequeue$ instances which it has not yet locally executed.  These instances do not affect the choice of return value for a slow $Dequeue$, though, so each $p_j$ will delete the same $ret$ when locally executing $op$.  

    Finally, we need to prove that Claim~\ref{ooolem:safeLabels} still holds after each process executes line~\ref{oooline:label}, and that all processes label the same elements for $p_i$.  But this follows by the same logic that tells us that $ret$ was the argument of one of the first $k$ unmatched $Enqueue$ instances in $\rho$, as each process chooses elements to label that are the oldest unlabeled elements in $lQueue$.  Thus, each element labeled in the local execution of $op$ is the argument of one of the first $k$ unmatched $Enqueue$ instances in $\rho \cdot op$.  Since all processes locally executed all $Dequeue$ instances in the same order, each will label the same elements while locally executing $op$.
  \item $op$ is a fast $Dequeue$ instance, $op = Dequeue(,-,ret)$.  Since we only label elements for process $p_i$ during local execution of a slow $Dequeue$ instance $op_\ell$ invoked at $p_i$ and no process can have two pending operation instances at the same time, we know that $op$ is invoked after $op_\ell$ returned, so $op$ linearizes after $op_\ell$.  Thus, by Claim~\ref{ooolem:safeLabels}, $ret$ was the argument of one of the first $k$ unmatched $Enqueue$ instances in $\pi$ before $op_\ell$, so it must still be before $op$.  By Lemma~\ref{ooolem:labelOwner}, no other process could have returned $ret$, and $p_i$ removes $Dequeue$ return values from its $lQueue$ before returning them, so could not have returned it as the return value of another $Dequeue$ instance, and thus $ret$ is still the argument of one of the first $k$ unmatched $Enqueue$ instances in $\rho$.
  \end{itemize}
\end{proof}

\begin{theorem}
  Algorithm~\ref{alg:relaxed} implements a queue with $k$-Out-of-Order relaxed $Dequeue$.
\end{theorem}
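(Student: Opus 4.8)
The plan is to assemble the theorem directly from the three ingredients of linearizability, exactly mirroring the structure of the proof for Algorithm~\ref{alg:fifo}. Linearizability of an admissible timed run $R$ requires three things: that every invocation in $R$ completes with a matching response, so that we are ordering complete operation instances; that there is a total order of all those instances that is legal by the ADT specification; and that this order respects the real-time precedence of non-overlapping instances. The preceding lemmas supply each of these in turn for the $k$-Out-of-Order queue, so the theorem is a short combination.

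First I would invoke the matching-response lemma for Algorithm~\ref{alg:relaxed}, which guarantees that every $Enqueue$, slow $Dequeue$, and fast $Dequeue$ invocation in $R$ has a response; this lets us treat $R$ as a collection of complete instances. Next I would take $\pi$ from Construction~\ref{constr:relaxed} as the candidate linearization. The lemma establishing that Construction~\ref{constr:relaxed} respects the real-time order of non-overlapping instances shows $\pi$ is a valid total order with respect to real time, and the legality lemma shows $\pi$ is legal by the definition of the $k$-Out-of-Order queue. Putting these together, $\pi$ is a valid linearization of $R$, and since $R$ was an arbitrary admissible timed run, Algorithm~\ref{alg:relaxed} is a correct linearizable implementation.

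There is no genuine obstacle remaining at the level of the theorem itself: all of the difficulty has already been discharged in the supporting lemmas, and the theorem proof is a one-paragraph citation of the real-time and legality lemmas, identical in form to the FIFO theorem. The delicate parts all live upstream, namely establishing that Construction~\ref{constr:relaxed}'s placement of each fast $Dequeue$ after the later of the last elements of $\rho$ and $\sigma$ preserves real-time order (which relies on Lemma~\ref{ooolem:fastDeqOrdering} to keep fast instances ahead of $\tau$), and the inductive legality argument that carries Claim~\ref{ooolem:safeLabels} in tandem and uses the ownership invariant of Lemma~\ref{ooolem:labelOwner} to justify that each fast $Dequeue$ returns one of the $k$ oldest unmatched values at its linearization point. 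Given those results, the only remaining work is to state that $\pi$ simultaneously satisfies real-time order and legality, hence is a valid linearization, completing the proof.
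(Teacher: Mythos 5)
Your proposal is correct and matches the paper's intent exactly: the paper in fact states this theorem with no written proof at all, relying on the reader to assemble it from the preceding lemmas precisely as you do, mirroring the one-paragraph proof of the FIFO theorem. Your citation of the response lemma, Construction~\ref{constr:relaxed}'s real-time-order lemma, and the legality lemma is the intended (and sufficient) argument, so there is nothing to correct.
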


\subsection{Complexity}

Since each operation instance returns to the user in or before local execution, which occurs no later than when the invoking process receives acknowledgments from every process, sent when they first receive a message about the invocation, the worst-case response time of each operation in Algorithm~\ref{alg:relaxed} is $2d$.  

However, in many typical scenarios, most $Dequeue$ instances will return with no delay at all.  Since we use the same labeling structure as the previous algorithm \cite{TalmageWelch14}, the runtime analysis will be the same, but with slow $Dequeue$ instances taking $2d$ time instead of $d+\epsilon$ and fast $Dequeue$ instances taking $0$ time instead of $\epsilon$.  Specifically, in a \emph{heavily-loaded} run, which starts with at least $k$ $Enqueue$ instances, and maintains at least $k$ more $Enqueue$ than $Dequeue$ instances in any prefix, only one in every $\lfloor k/n\rfloor$ $Dequeue$ instances will take $2d$ time to return, giving us the following theorem:

\begin{theorem}
  In any complete, admissible, heavily-loaded run, the total cost of $m$ $Dequeue$ instances, with $m_i$ at each process $p_i$ is at most $\sum_{i=0}^{n-1}\left\lceil\frac{m_i}{\lfloor k/n\rfloor}\right\rceil \leq \left(\frac{m-n}{\lfloor k/n\rfloor} + n\right)(2d)$.
\end{theorem}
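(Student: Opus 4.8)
The plan is to split the $m$ $Dequeue$ instances into \emph{fast} and \emph{slow} instances, bound the real-time cost of each kind, count the slow instances performed at each process, and finish with a short algebraic simplification. For the cost of a single instance: a fast $Dequeue$ passes the check on line~\ref{oooline:checkFast} and returns on line~\ref{oooline:fastDeqResponse} before sending any message, so it contributes $0$ to the total cost; a slow $Dequeue$ returns no later than when its invoking process has collected a $DeqAck$ from every process, which (exactly as in the worst-case analysis preceding this theorem) takes at most $2d$, since the request reaches each process in at most $d$ and each acknowledgment returns in at most $d$. Hence the total cost is at most (number of slow $Dequeue$s)$\,\times 2d$, and it suffices to bound the number of slow $Dequeue$s invoked at each $p_i$ by $\lceil m_i/\lfloor k/n\rfloor\rceil$.

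The core step is the per-process counting argument, which I expect to be the main obstacle. I would prove that consecutive slow $Dequeue$s at $p_i$ are separated by at least $\lfloor k/n\rfloor$ fast $Dequeue$s. When a slow $Dequeue$ locally executes at $p_i$, line~\ref{oooline:label} labels $\min\{\lfloor k/n\rfloor, y\}$ of the oldest unlabeled elements for $p_i$, where $y$ is the current unlabeled size. The heavily-loaded hypothesis is what guarantees a \emph{full} batch of $\lfloor k/n\rfloor$ labels: at that moment $p_i$ holds no labeled elements (otherwise the instance would have been fast), so at most $(n-1)\lfloor k/n\rfloor$ elements are labeled for other processes, while there are at least $k \geq n\lfloor k/n\rfloor$ unmatched $Enqueue$ instances, leaving at least $\lfloor k/n\rfloor$ unlabeled elements to label. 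By Lemma~\ref{ooolem:labelOwner} these labels are consumed only by fast $Dequeue$s invoked at $p_i$, one element per instance, and (as observed after that lemma) they are never relabeled. Since a $Dequeue$ at $p_i$ is fast exactly when $p_i$ still has a labeled element (line~\ref{oooline:checkFast}), the $\lfloor k/n\rfloor$ $Dequeue$s following a slow one are all fast, and another slow $Dequeue$ can occur only once these labels are exhausted.

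It follows that among any $\lfloor k/n\rfloor$ consecutive $Dequeue$s at $p_i$ at most one is slow. Partitioning $p_i$'s $m_i$ $Dequeue$s into $\lceil m_i/\lfloor k/n\rfloor\rceil$ consecutive blocks of size at most $\lfloor k/n\rfloor$, each block contains at most one slow instance, so $p_i$ performs at most $\lceil m_i/\lfloor k/n\rfloor\rceil$ slow $Dequeue$s; summing the $2d$ bound over all processes gives the first inequality. For the second inequality I would use the elementary fact that $\lceil a/b\rceil \leq (a-1)/b + 1$ for integers $a,b\geq 1$, so that
\[
\sum_{i=0}^{n-1}\left\lceil\frac{m_i}{\lfloor k/n\rfloor}\right\rceil \leq \sum_{i=0}^{n-1}\left(\frac{m_i-1}{\lfloor k/n\rfloor}+1\right) = \frac{m-n}{\lfloor k/n\rfloor}+n,
\]
using $\sum_i m_i = m$; multiplying through by $2d$ completes the bound. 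The time bounds and this final algebra are routine; the delicate point is the counting step, specifically confirming through the heavily-loaded invariant that every slow $Dequeue$ labels the full $\lfloor k/n\rfloor$-element batch (never fewer) and that each label is consumed by exactly one fast $Dequeue$ and never recycled, which is what makes the $\lfloor k/n\rfloor$ spacing between slow instances hold.
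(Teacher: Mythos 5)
Your proof is correct and takes essentially the same approach as the paper: the paper likewise charges $2d$ to the single slow $Dequeue$ in each block of $\lfloor k/n\rfloor$ consecutive $Dequeue$ instances at a process (relying on the batch of $\lfloor k/n\rfloor$ labeled elements to make the rest fast) and absorbs the per-process ceilings into the $+\,n$ term. Your write-up is in fact more detailed than the paper's brief sketch, as it makes explicit both the full-batch labeling guaranteed by the heavily-loaded hypothesis and the algebra $\lceil a/b\rceil \leq (a-1)/b + 1$ behind the second inequality.
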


Each consecutive sequence of $\lfloor k/n\rfloor$ $Dequeue$ instances at each process will take $2d$ total time, which gives the first term.  The worst case is if there is one final, slow $Dequeue$ instance at each process, giving the extra $n(2d)$.  As $m$ increases, the effect of such a final slow $Dequeue$ instance vanishes, giving a practical runtime bound of $\left(\frac{m}{\lfloor k/n \rfloor}\right)(2d)$.

This is inversely proportional to the degree of relaxation $k$, so we can see that increasing relaxation allows us to reduce the total cost at will.  At $k=2n$, we halve the total runtime.  At $k=3n$, it is a third of that of an unrelaxed queue, and so on.  Thus, relaxation allows us to practically circumvent lower bounds on the performance of shared data structures, via a tunable tradeoff between ordering guarantees and performance, in an asynchronous model just as in the previous partially synchronous model.

If the run is not heavily loaded, the average cost of each $Dequeue$ instance at a process in a group from a slow instance until immediately before the next will depend on the number of elements that initial slow $Dequeue$ finds to label for its invoking process.  This is an ``effective $l$'', where $l = \lfloor k/n\rfloor$ \cite{TalmageWelch14}, and we observe that that group of $Dequeue$ instances will have an average cost of $\frac{2d}{l'}$, where $l'$ is the number of elements the leading slow $Dequeue$ instance is able to label for its invoking process. 

Note that we did not include labeling elements on insertion before the first $Dequeue$ instance, as Talmage and Welch did.  Whether that optimization works in the asynchronous model is an open question.  It is certainly less straightforward, since having $Enqueue$ instance label elements is more dangerous with our more complicated linearization order.  In any case, our algorithm still labels as many elements as are available, up to $k$, at the first (and every) $Dequeue$ instance, so after at most a single expensive operation per process, the average runtime, with most $Dequeue$ instances instantaneous, holds.

It is also worth calling attention to the counterintuitive fact that fast $Dequeue$ instances are \emph{faster} in our asynchronous algorithm than in the previous partially-synchronous algorithm.  We would expect worse performance in the harder model, as we see for slow $Dequeue$ instances.   The previous algorithm needed non-zero time for fast $Dequeue$ instances so that it could use timestamp order for its linearization, so a more complex linearization, such as we use, could likely match this performance in the partially synchronous model, but this complete removal of communication-related cost from most $Dequeue$ instances is another contribution of our new algorithm.

\section{Conclusion}

We have presented two algorithms, one implementing a traditional FIFO queue and one implementing a queue with $k$-Out-of-Order relaxed $Dequeue$, in asynchronous failure-free systems.  Both of our algorithms have a worst-case complexity of one message round trip ($2d$) per operation instance.  The relaxed queue algorithm has the same worst-case complexity, but much lower average complexity.  Thus, we have shown that relaxation can not only improve a queue's performance in idealized partially-synchronous models as prior work showed, but also in asynchronous models, by making many $Dequeue$ instances able to return after purely local computation, leaving coordination to happen in the background and allowing the user program to make progress.

Both of our algorithms offer full replication.  There is an existing centralized algorithm for the unrelaxed queue with equal performance to ours \cite{Lynch96}, but that puts a heavy load and reliance on one process.  While we do not here consider failures, having a replicated algorithm puts us in a good place to extend our algorithms to tolerate failures.  We are continuing in that direction, exploring how much extra cost failure-tolerance imposes, and how much of that cost relaxation may enable us to avoid.

\bibliography{refs.bib}

\end{document}